\newcommand{\keywords}[1]{\noindent\textbf{\textit{Keywords---}} #1}
\newcommand{\N}{\ensuremath{\mathbb{N}}}
\newcommand{\Z}{\ensuremath{\mathbb{Z}}}
\newcommand{\R}{\ensuremath{\mathbb{R}}}
\newcommand{\F}{\ensuremath{{\mathbb{F}}}}
\newcommand{\G}{\ensuremath{{\mathbb{G}}}}
\newcommand{\HH}{\ensuremath{{\mathbb{H}}}}
\newcommand{\X}{\ensuremath{{\mathbb{X}}}}
\newcommand{\Y}{\ensuremath{{\mathbb{Y}}}}
\newcommand{\U}{\ensuremath{{\mathbb{U}}}}
\newcommand{\inner}[2]{{\langle #1, #2 \rangle}}
\newcommand{\verteq}{\rotatebox{90}{$\,=$}}
\newcommand{\Patt}[2]{\ensuremath{\mathcal{L}_{#2}(#1)}}
\newcommand{\Lang}[1]{\ensuremath{\mathcal{L}(#1)}}
\renewcommand{\vec}[1]{\mathbf{#1}}
\newcommand{\XP}{\ensuremath{{\cal X}}}
\newcommand{\id}{\ensuremath{\mathbf{1}}}
\newcommand{\kernel}[1]{\ensuremath{\mathrm{ker}(#1)}}
\newcommand{\setker}[1]{\ensuremath{\mathrm{cut}(#1)}}
\newcommand{\diff}[2]{\ensuremath{\mathrm{diff}(#1,#2)}}
\newcommand{\SP}[1]{\ensuremath{\mathrm{ST}(#1)}}
\newcommand{\trace}[2]{\ensuremath{\mathrm{Tr}_{#2}(#1)}}
\newcommand{\limit}[1]{\ensuremath{\Omega_{#1}}}
\newcommand{\eq}[1]{\ensuremath{\mathrm{Eq}(#1)}}
\newtheorem{theorem}{Theorem}
\newtheorem{lemma}{Lemma}
\newtheorem{corollary}{Corollary}
\newtheorem{claim}{Claim}
\newtheorem{question}{Question}
\theoremstyle{definition}
\newtheorem{exm}{Example}
\newenvironment{example}{%
  \begin{exm}
}{%
  \end{exm} \begin{center}\rule{2cm}{0.4pt}\end{center}
}
\begin{document}

\title{Effective Projections on Group Shifts to Decide Properties of Group Cellular Automata}

\author[1]{Pierre B\'{e}aur}
\author[2]{Jarkko Kari}

\affil[1]{Laboratoire Interdisciplinaire des Sciences du Num\'{e}rique, Universit\'{e} Paris-Saclay}
\affil[2]{Department of Mathematics and Statistics, University of Turku, Finland}

\date{}

\maketitle

\abstract{
\begin{center}
\begin{minipage}{\dimexpr\paperwidth-10cm}
\noindent Many decision problems concerning cellular automata are known to be decidable in the case of algebraic cellular automata, that is, when  the state set has an algebraic structure and the automaton acts as a morphism. The most studied cases include finite fields, finite commutative rings and finite commutative groups. In this paper, we provide methods to generalize these results to the broader case of group cellular automata, that is, the case where the state set is a finite (possibly non-commutative) finite group. The configuration space is not even necessarily the full shift but a subshift -- called a group shift --  that is a subgroup of the full shift on $\Z^d$, for any number $d$ of dimensions. We show, in particular, that injectivity, surjectivity, equicontinuity, sensitivity and nilpotency are decidable for group cellular automata, and non-transitivity is semi-decidable. Injectivity always implies surjectivity, and jointly periodic points are dense in the limit set. The Moore direction of the Garden-of-Eden theorem holds for all group cellular automata, while the Myhill direction fails in some cases. The proofs are based on effective projection operations on group shifts that are, in particular, applied on the set of valid space-time diagrams of group cellular automata. This allows one to effectively construct the traces and the limit sets of group cellular automata. A preliminary version of this work was presented at the conference Mathematical Foundations of Computer Science 2020.
 \end{minipage}
 \end{center}
}
\bigskip

\keywords{group cellular automata; group shift; symbolic dynamics; decidability}

\section{Introduction}

Algebraic group shifts and group cellular automata operate on configurations that are colorings of the infinite
grid $\Z^d$ by elements of a finite group $\G$, called the state set.
The set $\G^{\Z^d}$ of all configurations, called the full shift, inherits the group structure
as the infinite cartesian power of $\G$.
A subshift (a set of configurations avoiding a fixed set of forbidden finite patterns) is a group shift if it is also a subgroup of $\G^{\Z^d}$.
Group shifts are known to be of finite type, meaning that they can be defined by forbidding a finite number of patterns.
A cellular automaton is a dynamical system on a subshift, defined by a uniform local update rule of states. A cellular automaton on a group shift
is called a group cellular automaton if it is also a group homomorphism.

In this work we demonstrate that group shifts and group cellular automata in arbitrarily high dimensions $d$
are amenable to effective manipulations and
algorithmic decision procedures. This is in stark contrast to the general setup of
multidimensional subshifts of finite type and cellular automata where most properties are undecidable.
are plagued by undecidability.
Our considerations generalize a long line of past results -- see for example~\cite{DennunzioTCS2020,DennunzioPaper2019} and citations therein -- on algorithms for linear cellular automata (whose state set is a finite commutative ring)
and additive cellular automata (whose state set is a finite abelian group)  to non-commutative
groups and to arbitrary dimensions, and from the full shift to arbitrary group shifts.
Our methods are based on two classical results on group shifts: all group shifts -- in any dimension -- are of finite type,
and they have dense sets of periodic points~\cite{kitchens_schmidt_1989,schmidtBOOK}.
By a standard argumentation these provide a decision procedure for the membership in the
language of any group shift. We show how to use this procedure to effectively construct any lower dimensional projection
of a given group shift (Corollary~\ref{cor:projection}), and to construct the image of a given group shift under any given group cellular automaton
(Corollary~\ref{cor:homorphicimage}).

To establish decidability results for $d$-dimensional group cellular automata we then view the set of valid space-time diagrams as a $(d+1)$-dimensional
group shift. The local update rule of the cellular automaton provides a representation of this group shift.
The one-dimensional projections in the temporal direction are the trace subshifts of the automaton that provide all possible
temporal evolutions for a finite domain of cells, and the $d$-dimensional projection in the spatial dimensions is the limit set
of the automaton. These can be effectively constructed.
From the trace subshifts -- which are one-dimensional group shifts themselves --
one can analyze the dynamics of the cellular automaton and to decide, for example, whether it is periodic (Theorem~\ref{thm:decideperiodicity}),
equicontinuous or sensitive to initial conditions (Theorem~\ref{thm:sensitive}).
There is a dichotomy between equicontinuity and sensitivivity (Lemma~\ref{lem:dichotomy}).
We can semi-decide negative instances of mixing properties, i.e.,
non-transitive and non-mixing cellular automata (Theorem~\ref{thm:mixing}).
The limit set reveals whether the automaton is nilpotent (Theorem~\ref{thm:decideperiodicity}),
surjective or
injective (Theorem~\ref{thm:injectivity}).
Note that all these considerations work for group cellular automata over arbitrary group shifts, not only over
full shifts, and in all dimensions. We also note that in our setup injectivity implies surjectivity (Corollary~\ref{cor:injectiveimpliessurjective}) and that surjectivity implies pre-injectivity
(Theorem~\ref{thm:group_goe}), with neither implication holding in the inverse direction in general.
Moreover, in all surjective cases jointly spatially and temporally periodic points are dense (Corollary~\ref{cor:densetemporallyperiodic}).

The paper is structured as follows. We start by providing the necessary terminology and classical results about
shift spaces and cellular automata; first in the general context of multidimensional
symbolic dynamics and then in the algebraic setting in particular. In Section~\ref{sec:groupshifts}
we define projection operations on group shifts and exhibit effective algorithms to implement them.
This involves the main technical proof of the paper.
In Section~\ref{sec:groupCA} we apply the projections on space-time diagrams of cellular automata
to effectively construct their traces and limit sets. These are then used to provide decision algorithms for a number of
properties concerning group cellular automata.

We presented a preliminary version of this work at the conference Mathematical Foundations of Computer Science (MFCS 2020)~\cite{mfcs2020}.
The present article adds the main proof in Section~\ref{sec:groupshifts} of how the projections can be effectively constructed,
and a new part in Section~\ref{sec:groupCA} concerning the Garden-of-Eden theorem.


\section{Preliminaries}
\label{sec:prelminaries}

We first give definitions related to general subshifts and cellular automata, and then discuss concepts and properties
particular to group shifts and group cellular automata.

\subsection*{Symbolic dynamics}

A $d$-dimensional \emph{configuration}  over a finite alphabet $A$ is an assignment of symbols of $A$ on the infinite grid $\Z^d$. We call the elements of $A$
the \emph{states}.
For any configuration $c\in A^{\Z^d}$ and any cell $\vec{u}\in\Z^d$, we denote by $c_{\vec{u}}$ the state $c(\vec{u})$
that $c$ has in the cell $\vec{u}$. For any $a\in A$ we denote by $a^{\Z^d}$ the \emph{uniform} configuration defined by $a^{\Z^d}_{\vec{u}}=a$ for all $\vec{u}\in\Z^d$.

 For a vector $\vec{t}\in\Z^d$,  the \emph{translation} $\tau^{\vec{t}}$
shifts a configuration $c$ so that the cell $\vec{t}$ is pulled to the cell $\vec{0}$, that is,  $\tau^{\vec{t}}(c)_{\vec{u}}=c_{\vec{u}+\vec{t}}$ for all $\vec{u}\in\Z^d$.
We say that $c$ is \emph{periodic} if $\tau^{\vec{t}}(c)=c$ for some non-zero $\vec{t}\in\Z^d$. In this case $\vec{t}$ is a \emph{vector of periodicity} and $c$ is also termed
\emph{$\vec{t}$-periodic}. If there are $d$ linearly independent vectors of periodicity then $c$ is called \emph{totally periodic}.
We denote by $\vec{e}_i=(0,\dots,0,1,0\dots, 0)$ the basic $i$'th unit coordinate vector, for $i=1,\dots ,d$.
A totally periodic $c\in A^{\Z^d}$ has automatically, for some $k>0$, vectors of periodicity $k\vec{e}_1, k\vec{e}_2, \dots ,k\vec{e}_d$ in the $d$ coordinate directions.

Let $D\subseteq\Z^d$ be a finite set of cells, a \emph{shape}. A \emph{$D$-pattern} is an assignment $p\in A^D$  of symbols in the shape $D$. A
\emph{(finite) pattern} is a $D$-pattern for some shape $D$. We call $D$ the \emph{domain} of the pattern.
We say that a finite pattern $p$ of shape $D$ \emph{appears} in a configuration $c$ if for some $\vec{t}\in\Z^d$
we have $\tau^{\vec{t}}(c)|_{D}=p$. We also say that $c$ \emph{contains} the pattern $p$. For a fixed  $D$,
the set of $D$-patterns that appear in a configuration $c$ is denoted by $\Patt{c}{D}$. We denote by $\Lang{c}$ the set of all finite patterns that appear in $c$, i.e., the union of
$\Patt{c}{D}$ over all finite $D\subseteq\Z^d$.

Let $p\in A^D$ be a finite pattern of a shape $D$.
The set $[p]=\{c\in A^{\Z^d}\ |\ c|_{D}=p\}$ of configurations that have $p$ in the domain $D$ is called
the \emph{cylinder} determined by $p$.
The collection of cylinders $[p]$ is a base of a compact topology on $A^{\Z^d}$, the \emph{prodiscrete} topology.
See, for example, the first few pages of~\cite{tullio} for details.
The topology is equivalently defined by a metric on $A^{\Z^d}$ where two configurations are close to each other if they agree
with each other on a large region around the cell $\vec{0}$. Cylinders are clopen in the topology: they are both open and closed.

A subset $X$ of $A^{\Z^d}$ is called a \emph{subshift} if it is closed in the topology
and closed under translations.
Note that -- somewhat nonstandardly --  we allow $X$ to be the empty set.
By a compactness argument one has that every configuration $c$ that is not in $X$ contains a finite pattern $p$
that prevents it from being in $X$: no configuration that contains $p$ is in $X$. We can then as well  define subshifts  using forbidden patterns:
given a set $P$ of finite patterns we define
$$\XP_P=\{c\in A^{\Z^d}\ |\  \Lang{c}\cap P=\emptyset\},$$
the set of configurations that do not contain any of the patterns in $P$.
The set $\XP_P$ is a subshift, and every subshift is $\XP_P$ for some $P$.
If $X=\XP_P$ for some finite $P$ then $X$ is a \emph{subshift of finite type} (SFT).
For a subshift $X\subseteq A^{\Z^d}$ we denote by $\Patt{X}{D}$ and $\Lang{X}$ the sets of the $D$-patterns and all finite patterns that appear in
elements of $X$, respectively. The set $\Lang{X}$ is called the \emph{language} of the subshift.

A continuous function $F:X\longrightarrow Y$ between $d$-dimensional subshifts $X\subseteq A^{\Z^d}$ and $Y\subseteq B^{\Z^d}$ is a \emph{shift homomorphism}
if it is translation invariant, that is,  $\tau^{\vec{t}}_Y\circ F= F\circ \tau^{\vec{t}}_X$ for every $\vec{t}\in\Z^d$, where we have
denoted the translations $\tau^{\vec{t}}$ by a vector $\vec{t}$ with a subscript that indicates the space. A shift homomorphism from a subshift
$X$ to itself (i.e. a shift endomorphism) is called a \emph{cellular automaton} on $X$. The Curtis-Hedlund-Lyndon-theorem~\cite{hedlund}
states that shift homomorphisms are precisely the functions $X\longrightarrow Y$
defined by a local rule as follows. Let $N\subseteq\Z^d$ be a finite \emph{neighborhood}
and let $f:\Patt{X}{N}\longrightarrow B$ be a \emph{local rule} that assigns a letter of $B$ to every $N$-pattern that appears in $X$.
Applying $f$ at each cell yields a function $F_f:X\longrightarrow B^{\Z^d}$ that maps
every $c$ according to $F_f(c)_{\vec{u}}=f(\tau^{\vec{u}}(c)|_{N})$ for all $\vec{u}\in\Z^d$.
Shift homomorphisms $X\longrightarrow Y$ are precisely
such functions $F_f$ that also satisfy $F_f(X)\subseteq Y$.

The image $F(X)$ of a subshift under a shift homomorphism $F$ is clearly also a subshift. Images of subshifts of finite type
are called \emph{sofic}. We refer to~\cite{kurkaBOOK,LindMarcus95} for more concepts and results on symbolic dynamics.

\subsection*{Group shifts and group cellular automata}

Let $\G$ be a finite (not necessarily commutative) group.
There is a natural group structure on the $d$-dimensional
configuration space $\G^{\Z^d}$ where the group operation is applied cell-wise: $(ce)_{\vec{u}}=c_{\vec{u}}e_{\vec{u}}$ for all
$c,e\in\G^{\Z^d}$ and $\vec{u}\in \Z^d$. A \emph{group shift} is
a subshift of $\G^{\Z^d}$ that is also a subgroup. In particular, a group shift is not empty.
A cellular automaton $F:\X\longrightarrow \X$ on a group shift $\X\subseteq \G^{\Z^d}$
is a \emph{group cellular automaton} if it is a group homomorphism: $F(ce)=F(c)F(e)$ for all $c,e\in\X$.
More generally, a shift homomorphism $F:\X\longrightarrow \Y$ that is also a group homomorphism between groups shifts $\X$ and $\Y$
is called a \emph{group shift homomorphism}.

Group shifts have two
important properties that are central in algorithmic decidability~\cite{groupshift2}:
every group shift is of finite type, and totally
periodic configurations are dense in all group shifts~\cite{kitchens_schmidt_1989,schmidtBOOK}.

\begin{theorem}[\cite{kitchens_schmidt_1989}]
\label{thm:groupSFT}
Every group shift is a subshift of finite type.
\end{theorem}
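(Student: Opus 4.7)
The plan is to approximate $X$ from above by a decreasing chain of group-shift SFTs and then argue this chain stabilizes. For each $n \geq 1$, let $B_n = [-n,n]^d$. Since $X$ is a subgroup of $\G^{\Z^d}$ and the restriction map to the window $B_n$ is a group homomorphism, its image $\Patt{X}{B_n}$ is a subgroup of the finite group $\G^{B_n}$. Define the SFT approximation
$$X_n = \{c \in \G^{\Z^d} \,:\, \tau^{\vec{u}}(c)|_{B_n} \in \Patt{X}{B_n} \text{ for every } \vec{u} \in \Z^d\}.$$
Then $X_n$ is an SFT (with finite set of forbidden patterns $\G^{B_n} \setminus \Patt{X}{B_n}$) and is itself a group shift, being an intersection of translates of preimages of subgroups. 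We have $X \subseteq X_{n+1} \subseteq X_n$ for every $n$, and a standard compactness argument gives $X = \bigcap_{n \geq 1} X_n$. Since each $X_n$ is SFT, it suffices to show that the chain $(X_n)_n$ stabilizes at some finite $n$.

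A useful preliminary observation is that pattern languages stabilize shape by shape. For any fixed finite shape $D$, the sets $\Patt{X_n}{D}$ are subgroups of the finite group $\G^D$ and form a decreasing chain in $n$, so they stabilize at some $n_0(D)$. Taking a convergent subsequence of configurations $c_n \in X_n$ that realize a given limiting pattern, compactness of $\G^{\Z^d}$ together with $X = \bigcap_n X_n$ shows the stable value is $\Patt{X}{D}$.

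The main obstacle is to upgrade this shape-by-shape stabilization into stabilization of the full subshifts $X_n$, since the threshold $n_0(D)$ above could a priori grow with $|D|$. Closing this gap is the technical heart of the proof and is where the algebraic structure is essential: one exploits that $\G^{\Z^d}$ is a profinite (compact, totally disconnected) group, that each $X_n$ is a closed subgroup containing $X$, and finiteness arguments on the finite groups $\Patt{X}{B_n}$, to force the chain $(X_n)_n$ to become eventually constant. In the abelian case this reduces to Noetherianity of the group ring $\Z[\Z^d]$; the general case is handled by Kitchens and Schmidt's more delicate compactness-plus-finiteness argument.
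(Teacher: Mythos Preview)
The paper does not supply its own proof of this theorem: it is stated with a citation to Kitchens--Schmidt and used as a black box. So there is no ``paper's proof'' to compare against directly. That said, the paper does make a relevant remark just before Theorem~\ref{thm:nodecreasingchain}: it derives the descending chain condition (DCC) for group shifts \emph{from} Theorem~\ref{thm:groupSFT}, but then explicitly warns in parentheses that in the original Kitchens--Schmidt argument the DCC and the SFT property are proved together, not one from the other.

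Your outline is structurally sound and in fact matches that warning: you set up the canonical decreasing chain $X_1\supseteq X_2\supseteq\cdots$ of SFT group-shift approximations with $\bigcap_n X_n=X$, and correctly observe that the whole problem reduces to showing this chain stabilizes. You also honestly flag that your shape-by-shape stabilization does not by itself yield stabilization of the $X_n$, and that bridging this is ``the technical heart of the proof.''

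The gap, then, is not a wrong idea but an unfinished one: the proposal is a reduction to the DCC plus a citation, not a proof. The last paragraph gestures at the two known mechanisms (Noetherianity of $\Z[\Z^d]$ in the abelian case; the Kitchens--Schmidt compactness/finiteness argument in general) without carrying either out. If you intend this as a self-contained proof, you must actually supply one of these arguments; if you intend it as a proof sketch that locates the difficulty and defers to the literature, you should say so explicitly. Note also that you cannot appeal to the DCC as stated in the paper (Theorem~\ref{thm:nodecreasingchain}), since the paper derives it from the very theorem you are proving---you need the independent Kitchens--Schmidt argument, exactly as the paper's parenthetical remark indicates.
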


It follows from this theorem that every group shift $\X$ has a finite representation using
a finite collection $P$
of forbidden finite patterns as $\X=\XP_P$. This is the representation assumed in all algorithmic questions concerning
given group shifts. Also when we say that we effectively construct a group shift $\X$
we mean that we produce a finite set $P$ of finite patterns such that $\X=\XP_P$.

\begin{theorem}[\cite{kitchens_schmidt_1989}]
\label{thm:periodic}
Totally periodic configurations are dense in group shifts, i.e., for every $p\in \Lang{\X}$ there is a totally periodic $c\in\X$ such that $p\in\Lang{c}$.
\end{theorem}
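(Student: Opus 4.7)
The plan is to combine the SFT property from Theorem~\ref{thm:groupSFT} with the group structure of $\X$, proceeding by induction on the dimension $d$. By Theorem~\ref{thm:groupSFT} fix a finite window shape $W$ and a finite set of forbidden $W$-patterns that define $\X$. Given $p\in\Lang{\X}$ of finite domain $D$, pick $c\in\X$ with $c|_D=p$; the goal is to find, for some $N$, an $(N\Z)^d$-periodic $c^*\in\X$ such that $c^*|_D=p$.

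For $d=1$, I would view elements of $\X\subseteq\G^{\Z}$ as bi-infinite walks in the finite ``pattern graph'' whose vertices are the admissible $W$-patterns in $\Lang{\X}$ and whose edges come from overlapping admissible patterns. Density of periodic points for a 1D SFT amounts to showing that every vertex lies on a cycle. Here the group structure does the work: the subgroup of $\X$ consisting of walks through a prescribed vertex is nonempty, and by pigeonhole on the finite vertex set its elements revisit matching states on both sides of $D$; splicing cyclically via the group operation in $\X$ then yields a periodic configuration still realizing $p$.

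For the inductive step $d\geq 2$, the plan is to first construct $c^{(1)}\in\X$ that is $N\vec{e}_1$-periodic for some $N$ and agrees with $c$ on $D$. Such a configuration corresponds to an element of a $(d-1)$-dimensional group shift over the alphabet $\G^{[0,N)}$, consisting of those $N$-stacks that satisfy the local constraints of $\X$ together with the wrap-around constraint across $\vec{e}_1$. Applying the inductive hypothesis within this lower-dimensional group SFT to the induced pattern produces a configuration that is totally periodic in the remaining coordinate directions, and therefore totally periodic as an element of $\X$.

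The hardest part will be the one-direction periodization step: naively cutting out a slab of $c$ of width $N$ and repeating it along $\vec{e}_1$ typically violates the local constraints at the seam. In general SFTs this can be fatal, as aperiodic tilings in dimension at least $2$ demonstrate, so the group structure of $\X$ must be used in an essential way. I would interpret the seam defect as an element of a closed subgroup of $\X$ determined by the local rules across the seam, and argue by pigeonhole over the finite group $\G^W$ that for infinitely many $N$ this defect can be absorbed by multiplying with an element of $\X$ supported away from $D$. This is where the bulk of the technical content lies and where the algebraic structure is genuinely used beyond the mere SFT property.
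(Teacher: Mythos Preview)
The paper does not prove Theorem~\ref{thm:periodic}; it quotes it from \cite{kitchens_schmidt_1989} and uses it as a black box, so there is no in-paper argument to compare your sketch against.

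Your inductive scheme on $d$ is the right skeleton, and once $N\vec{e}_1$-periodicity is achieved, passing to the $(d-1)$-dimensional group shift over $\G^{\{0,\dots,N-1\}}$ and recursing is correct. For $d=1$ you are right that the group structure matters---periodic points are \emph{not} dense in arbitrary one-dimensional SFTs (consider the vertex shift on $A\to A$, $A\to B$, $B\to C$, $C\to C$, where $B$ lies in the language but on no cycle)---though what you call ``the subgroup of walks through a prescribed vertex'' is a coset, not a subgroup. The real gap is the one-direction periodization for $d\geq 2$. When you wrap a width-$N$ slab, the seam data lives in $\G^{\{1,\dots,w\}\times\Z^{d-1}}$, an \emph{infinite} compact group, so pigeonhole over the finite group $\G^W$ cannot force two boundary profiles to agree; and ``multiplying by an element of $\X$ supported away from $D$'' presumes that $\X$ contains nontrivial configurations equal to $\id$ outside a bounded region, which a general group shift need not (take $\X=\{0^{\Z^d},1^{\Z^d}\}$ over $\Z_2$). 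A correct argument treats $\X$ as a one-dimensional group shift over the compact alphabet $\G^{\Z^{d-1}}$ and uses the structure theory of Markov subgroups over compact zero-dimensional groups, or alternatively exploits the descending chain condition of Theorem~\ref{thm:nodecreasingchain} applied to a suitable family of subgroup shifts, to obtain $\vec{e}_1$-periodic points realizing the prescribed pattern. Either way the missing ingredient is structural and cannot be replaced by a finite pigeonhole.
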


As an immediate corollary of these two fundamental properties we get that the language of a group shift is (uniformly) recursive.

\begin{corollary}
\label{cor:grouplanguage}
There is an algorithm that determines, for any
given group shift $\X\subseteq\G^{\Z^d}$ and any given finite pattern $p\in\G^D$
whether $p$ is in the language $\Lang{\X}$ of $\X$.
\end{corollary}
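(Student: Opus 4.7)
The plan is to combine the two given theorems to produce two complementary semi-algorithms, one that halts iff $p\in\Lang{\X}$ and one that halts iff $p\notin\Lang{\X}$, then run them in parallel.

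First I would describe the positive semi-algorithm. Since $\X=\XP_P$ for the finite set $P$ of forbidden patterns specifying $\X$, the membership $c\in\X$ of a totally periodic $c$ can be checked in finite time: only finitely many translates of patterns from $P$ fit into a single fundamental domain of $c$. So the semi-algorithm enumerates the periods $k=1,2,3,\dots$; for each $k$ it lists all $\G^{D_k}$-assignments with $D_k=\{0,\dots,k-1\}^d$, extends each to the unique configuration $c$ with vectors of periodicity $k\vec{e}_1,\dots,k\vec{e}_d$, tests whether no translate of any pattern of $P$ appears in $c$ and whether some translate of $c|_{D_k\,\cup\,(D+\supp(p))}$ displays $p$, and halts with ``yes'' as soon as such a $c$ is found. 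Theorem~\ref{thm:periodic} guarantees that if $p\in\Lang{\X}$ then some totally periodic element of $\X$ contains $p$, and such an element is eventually found.

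Next I would describe the negative semi-algorithm. Let $D_0$ be the shape of $p$. The algorithm enumerates finite shapes $D_0\subseteq D_1\subseteq D_2\subseteq\dots$ exhausting $\Z^d$; for each $D_n$ it searches (by brute force over the finitely many $\G^{D_n}$) for an extension $q$ of $p$ to $D_n$ in which no translate of a pattern of $P$ fits entirely inside $D_n$. If no such $q$ exists, the algorithm halts with ``no''. Correctness uses compactness of $\G^{\Z^d}$: if $p\in\Lang{\X}$ then some $c\in\X$ contains $p$, and its restriction to any $D_n$ (after a suitable translation that places $p$ inside $D_n$) provides a witness $q$, so the algorithm cannot wrongly halt. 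Conversely, if the algorithm never halts, then for each $n$ a witness $q_n$ exists; by König's lemma applied to the finitely branching tree of finite partial assignments avoiding forbidden patterns, these witnesses can be assembled into an infinite configuration in $\X$ containing $p$, hence $p\in\Lang{\X}$.

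Finally, running both semi-algorithms in parallel, exactly one of them halts in finite time, which yields the decision. The only delicate point is the termination of the positive procedure: without density of totally periodic configurations in $\X$ (Theorem~\ref{thm:periodic}) one would have no upper bound on the ``complexity'' of a witness; with it, the brute force over periods suffices. The negative side relies only on the SFT property (Theorem~\ref{thm:groupSFT}) via compactness, so the two structural facts about group shifts dovetail to produce the algorithm.
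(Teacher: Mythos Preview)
Your proposal is correct and follows essentially the same approach as the paper: a positive semi-algorithm enumerating totally periodic configurations (relying on Theorem~\ref{thm:periodic}) paired with a negative semi-algorithm that expands the domain of $p$ and checks that every extension contains a forbidden pattern (relying on compactness and the SFT property from Theorem~\ref{thm:groupSFT}). The paper's proof is somewhat terser and phrases the enumeration non-deterministically, but the underlying argument is the same.
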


\begin{proof}
This is a standard argumentation by Hao Wang~\cite{wang}: There is a (non-deterministic) semi-algorithm for positive membership $p\in\Lang{\X}$
that guesses a totally periodic configuration $c\in \G^{\Z^d}$, verifies that $c$ contains the pattern $p$, and finally
verifies that $c$ does not contain any of the forbidden patterns in the given set $P$
that defines $\X=\XP_P$. Such a configuration $c$ exists by Theorem~\ref{thm:periodic}
iff $p\in\Lang{\X}$. Conversely, as for any SFT, there is a semi-algorithm for the negative cases $p\not\in\Lang{\X}$ that guesses a number $n$, makes sure that
the domain $D$ of $p\in \G^D$ is a subset of $E=\{-n,\dots,n\}^d$,
enumerates all finitely many patterns $q$ with domain $E$ that satisfy $q|_{D}=p$, and verifies that all such $q$ contain a copy of
a forbidden pattern in $P$ that defines $\X=\XP_P$. By compactness such a number $n$ exists iff $p\not\in\Lang{\X}$.
\end{proof}
The representation of an SFT in terms of forbidden patterns is not unique. However, as soon as the language is recursive,
we can effectively test if given representations define the same SFT.
\begin{corollary}
\label{cor:groupequivalnece}
There are algorithms to determine 
\begin{enumerate}
\item[(a)] whether $\X_1\subseteq \X_2$ holds for given group shifts $\X_1,\X_2\subseteq\G^{\Z^d}$,
\item[(b)] whether $\X_1 = \X_2$ holds for given group shifts $\X_1,\X_2\subseteq\G^{\Z^d}$,
\end{enumerate}
\end{corollary}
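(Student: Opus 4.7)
The plan is to reduce both items to the decidability of the language membership problem established in Corollary~\ref{cor:grouplanguage}. The key observation is that the representation of a group shift we are working with is a finite set of forbidden patterns: we are given $\X_1 = \XP_{P_1}$ and $\X_2 = \XP_{P_2}$ for explicitly provided finite sets $P_1,P_2$ of finite patterns over $\G$.

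For part (a), I would use the following characterization: $\X_1 \subseteq \X_2 = \XP_{P_2}$ if and only if no configuration of $\X_1$ contains any pattern from $P_2$, that is,
$$\X_1 \subseteq \X_2 \quad \Longleftrightarrow \quad \Lang{\X_1} \cap P_2 = \emptyset.$$
The forward implication is immediate. For the converse, if no pattern of $P_2$ appears anywhere in any configuration of $\X_1$, then every $c\in\X_1$ avoids all forbidden patterns of $\X_2$, so $c\in\X_2$. Since $P_2$ is a finite set of finite patterns and Corollary~\ref{cor:grouplanguage} provides an algorithm to test membership $p \in \Lang{\X_1}$ for any given $p$, the condition on the right can be checked by running the membership algorithm on each of the finitely many patterns in $P_2$ and verifying that all answers are negative.

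For part (b), I would simply observe that $\X_1 = \X_2$ if and only if $\X_1 \subseteq \X_2$ and $\X_2 \subseteq \X_1$, so two calls to the algorithm from (a) suffice.

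There is no real obstacle here; the statement is a standard application of the recursiveness of the language. The only subtlety worth noting is that the characterization used in (a) relies on $\X_2$ being an SFT described by a \emph{finite} set of forbidden patterns, which is exactly the representation guaranteed by Theorem~\ref{thm:groupSFT} and assumed throughout for algorithmic inputs.
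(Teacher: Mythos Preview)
Your proposal is correct and matches the paper's own proof essentially verbatim: for (a) you test whether each of the finitely many forbidden patterns defining $\X_2$ lies in $\Lang{\X_1}$ via Corollary~\ref{cor:grouplanguage}, and for (b) you reduce equality to two inclusion tests. There is nothing to add.
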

\begin{proof}
To prove (a), let $P=\{p_1,\dots ,p_k\}$ be the given set of forbidden
patterns that defines $\X_2=\XP_P$. We have $\X_1\subseteq \X_2$ if and only if $p_1,\dots ,p_k\not\in\Lang{\X_1}$, so (a)
follows from Corollary~\ref{cor:grouplanguage}. Now  (b) follows trivially from (a) and the fact that
$\X_1=\X_2$ iff $\X_1\subseteq \X_2$ and $\X_2\subseteq \X_1$.
\end{proof}
Another important known property is that there are no infinite strictly decreasing chains
$\X_1\supsetneq  \X_2\supsetneq  \X_3\supsetneq  \dots$ of group shifts~\cite{kitchens_schmidt_1989}.
This is clear as the intersection $\X$ of such a chain is a group shift and hence, by
Theorem~\ref{thm:groupSFT}, there is a finite set $P$ such that $\X=\XP_P$. If a pattern $p$ is in the
languages of all $\X_k$ in the chain then $p$ is also in the language of the intersection $\X$,
proving that for large enough $k$ the language of $\X_k$ does not contain any of the forbidden
patterns in $P$. This implies that $\X_k=\X$ and the chain does not decrease any further.
(Note, however, that while we presented here the decreasing chain property as a corollary to
Theorem~\ref{thm:groupSFT}, in reality the proof is interweaved in the proof of
Theorem~\ref{thm:groupSFT}, see~\cite{kitchens_schmidt_1989}.)

\begin{theorem}[\cite{kitchens_schmidt_1989}]
\label{thm:nodecreasingchain}
There does not exist an infinite chain $\X_1\supsetneq  \X_2\supsetneq  \X_3\supsetneq  \dots$
of group shifts $\X_i\subseteq \G^{\Z^d}$.
\end{theorem}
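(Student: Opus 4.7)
The plan is to consider the intersection $\X=\bigcap_{k\geq 1}\X_k$ and use the SFT property (Theorem~\ref{thm:groupSFT}) to show the chain must stabilize at $\X$. Note first that $\X$ is itself a group shift: as an intersection of closed sets it is closed, as an intersection of shift-invariant sets it is shift-invariant, and as an intersection of subgroups of $\G^{\Z^d}$ it is a subgroup. Since $\X$ is non-empty (it contains the identity configuration $1^{\Z^d}$ since each $\X_k$ does), it is a bona fide group shift.

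By Theorem~\ref{thm:groupSFT}, there is a finite set $P=\{p_1,\dots,p_m\}$ of forbidden patterns with $\X=\XP_P$. The heart of the argument is to show that for each $p\in P$ there is some index $k_p$ with $p\notin \Lang{\X_{k_p}}$. Suppose for contradiction that $p\in\Lang{\X_k}$ for every $k$; using shift-invariance of each $\X_k$, pick for each $k$ a configuration $c_k\in\X_k$ such that $c_k|_D=p$, where $D$ is the domain of $p$. By compactness of $\G^{\Z^d}$ in the prodiscrete topology, the sequence $(c_k)$ has a subsequence converging to some $c\in\G^{\Z^d}$. For each fixed $j$, the tail of this subsequence lies in $\X_j$ (because the chain is decreasing), and since $\X_j$ is closed we conclude $c\in\X_j$. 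Hence $c\in\bigcap_j\X_j=\X$. Convergence in the prodiscrete topology means eventual pointwise agreement on any finite set, so $c|_D=p$, showing $p\in\Lang{\X}$; this contradicts $p\in P$.

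Once this lemma is in hand the rest is immediate. Let $K=\max\{k_{p_1},\dots,k_{p_m}\}$; since the chain is decreasing we have $\Lang{\X_K}\subseteq \Lang{\X_{k_p}}$ for every $p\in P$, so no forbidden pattern of $P$ appears in $\Lang{\X_K}$. Hence $\X_K\subseteq\XP_P=\X$, and combined with the trivial inclusion $\X\subseteq\X_K$ this gives $\X_K=\X$. Consequently $\X_k=\X$ for all $k\geq K$, which contradicts the assumed strict decrease of the chain.

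The main obstacle is the compactness step: one must be careful to first translate each $c_k$ so that the copy of $p$ lies at a fixed location, so that a single limit configuration can simultaneously inherit membership in every $\X_k$ (from closedness of each $\X_k$) and contain $p$ (from pointwise convergence on a fixed finite domain). Once the pattern is pinned down this way, the rest is bookkeeping with the SFT defining set $P$.
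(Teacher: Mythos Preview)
Your proof is correct and follows exactly the same route as the paper: take the intersection $\X=\bigcap_k\X_k$, observe it is a group shift and hence an SFT $\XP_P$ by Theorem~\ref{thm:groupSFT}, and use a compactness argument to conclude that for large enough $k$ every forbidden pattern of $P$ is already absent from $\Lang{\X_k}$, forcing $\X_k=\X$. The paper's only additional remark is the caveat that, in the original reference, the proofs of Theorem~\ref{thm:groupSFT} and this chain condition are in fact interwoven rather than one being a clean corollary of the other.
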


We also mention the obvious fact that pre-images of group shifts under group shift homomorphisms
$F:\X\longrightarrow \HH^{\Z^d}$ are group shifts and they can be
effectively constructed. In particular, this applies to the kernel $\kernel{F}=F^{-1} (\id^{\Z^d}_{\HH})$ of $F$.
(We denote the identity element of any group $\G$ by $\id_{\G}$, or simply by $\id$ if the group is clear from the context.)


\begin{lemma}
\label{lem:kernel}
For any given $d$-dimensional group shifts $\X\subseteq \G^{\Z^d}$ and $\Y\subseteq \HH^{\Z^d}$, and for a given
group shift homomorphism $F:\X\longrightarrow \HH^{\Z^d}$,
the set $F^{-1}(\Y)$ is a group shift that can be effectively constructed. In particular,
the kernel $\kernel{F}$ is a group shift that can be effectively constructed.
\end{lemma}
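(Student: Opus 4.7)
The plan splits into a structural step showing $F^{-1}(\Y)$ is a group shift and a constructive step producing a finite list of forbidden patterns for it. Structurally, $F$ is continuous, commutes with translations, and is a group homomorphism, while $\Y$ is a closed, translation-invariant subgroup of $\HH^{\Z^d}$. Pulling these properties back through $F$ shows that $F^{-1}(\Y)$ is a closed, translation-invariant subgroup of $\X$; since $\X$ itself sits inside $\G^{\Z^d}$ as a subgroup, $F^{-1}(\Y)$ is a group shift. Theorem~\ref{thm:groupSFT} then guarantees that a finite forbidden-pattern presentation exists, and the remaining work is to exhibit one algorithmically.

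For the construction, fix the given data: forbidden-pattern sets $P_\X$ and $P_\Y$ for $\X$ and $\Y$, together with a neighborhood $N$ and the local rule $f:\Patt{\X}{N}\longrightarrow\HH$ of $F$. For each $q\in P_\Y$, denote its shape by $D_q$ and let $\tilde f(p)$ be the $D_q$-pattern obtained by applying $f$ translate-by-translate to any $p\in\Patt{\X}{D_q+N}$. Then $q$ appears in $F(c)$ at shift $\vec{t}$ precisely when $\tilde f(c|_{\vec{t}+D_q+N})=q$. I would therefore take
$$P \;=\; P_\X \;\cup\; \bigcup_{q\in P_\Y}\bigl\{\, p\in\Patt{\X}{D_q+N}\;:\;\tilde f(p)=q\,\bigr\}.$$
A direct verification gives $\XP_P = F^{-1}(\Y)$: avoiding $P_\X$ forces $c\in\X$, and avoiding the remaining patterns forces $F(c)$ to avoid every $q\in P_\Y$, hence $F(c)\in\Y$.

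To produce $P$ effectively, I would enumerate, for each $q\in P_\Y$, the finitely many candidate patterns $p\in\G^{D_q+N}$, decide membership $p\in\Patt{\X}{D_q+N}$ using Corollary~\ref{cor:grouplanguage}, and for each valid $p$ compute $\tilde f(p)$ from $f$ and compare with $q$. The kernel case follows on setting $\Y=\{\id_{\HH}^{\Z^d}\}$, itself a group shift given by the singleton-shape forbidden patterns $\{g\in\HH : g\neq\id\}$, and applying the construction above. The only non-routine ingredient in the entire argument is the appeal to Corollary~\ref{cor:grouplanguage} for testing pattern membership in $\Lang{\X}$; this is the main potential obstacle, and everything else amounts to bookkeeping on local rules, shapes, and translations.
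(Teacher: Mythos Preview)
Your proof is correct and follows essentially the same approach as the paper: verify that $F^{-1}(\Y)$ is closed, translation-invariant, and a subgroup, then build an SFT presentation by forbidding $P_\X$ together with all local preimages of the forbidden patterns of $\Y$.

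One small remark: you flag the appeal to Corollary~\ref{cor:grouplanguage} as ``the only non-routine ingredient'' and ``the main potential obstacle'', but in fact it is unnecessary. The paper simply forbids \emph{all} $p\in\G^{D+N}$ that the local rule sends to $q$, without restricting to $p\in\Patt{\X}{D_q+N}$. This works because you are already forbidding $P_\X$, so any surviving configuration lies in $\X$ and hence only exhibits patterns from $\Patt{\X}{D_q+N}$ anyway; the extra forbidden patterns outside the language of $\X$ are harmless. (If $f$ is given only on $\Patt{\X}{N}$, extend it arbitrarily to $\G^N$; the extension does not affect the resulting SFT.) So the construction is purely finite bookkeeping, with no need to decide language membership.
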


\begin{proof}
The set $F^{-1}(\Y)$ is clearly topologically closed, translation invariant, and a group, and therefore it is a group shift.
Let $P$ and $Q$ be the given finite sets of forbidden patterns defining $\X=\XP_P$ and $\Y=\XP_Q$.
Let $f:\Patt{\X}{N}\longrightarrow \HH$ be the given local rule with neighborhood $N\subseteq\Z^d$ that defines $F=F_f$. For each
forbidden $q\in\HH^D$ in $Q$  we forbid all patterns $p\in \G^{D+N}$ that the local rule maps to $q$. We also forbid all patterns $p\in P$. The resulting subshift of finite type is $F^{-1}(\Y)$.
\end{proof}

\section{Algorithms for group shifts}
\label{sec:groupshifts}

To effectively manipulate group shifts we need algorithms to perform some basic operations. The main operations we consider are
taking projections, either to lower the dimension of the space or to project into a subgroup of the state set
but keeping the dimension. As a byproduct we obtain
an algorithm to compute the image of a given group shift under a given group cellular automaton.
We use derivatives of the symbol $\pi$ for projections from $\Z^d$ to lower dimensional grids, and derivatives of the symbol
$\psi$  for projections that keep the dimension of $\Z^d$ but change the state set.

\subsection*{Notations for projections to lower dimensions}

Let us first define the projection operators that cut from $d$-dimensional configurations $(d-1)$-dimensional
slices of finite width in the first dimension. Let $d\geq 1$
be the dimension and $n\geq 1$ the width of the slice. For any $d$-dimensional configuration $c\in A^{\Z^d}$ over alphabet $A$
the $n$-\emph{slice} $\pi^{(n)}(c)$ is the $(d-1)$-dimensional configuration over alphabet $A^n$ that has in any cell $\vec{u}\in\Z^{d-1}$ the
$n$-tuple $(c(1,\vec{u}),\dots ,c(n,\vec{u}))\in A^n$. The $n$-slice of a subshift $X\subseteq A^{\Z^d}$
is then the set $\pi^{(n)}(X)$ of the $n$-slices of
all $c\in X$. Due to translation invariance of $X$, the fact that we cut slices at first coordinate positions $1,\dots ,n$ is irrelevant:
we could use any $n$ consecutive first coordinate positions instead.
Clearly  $\pi^{(n)}(X)$ is a subshift,
and if $\X\subseteq \G^{\Z^d}$ is a group shift then $\pi^{(n)}(\X)$ is also a group shift over the group
$\G^n=\G\times\dots \times \G$,
the $n$-fold cartesian power of $\G$. Note that the projection $\pi^{(n)}(X)$ of a subshift of finite type is not necessarily of finite type -- basically any effectively closed subshift can arise this way~\cite{Hochman} -- so group shifts are
particularly well behaving as their projections are of finite type.

Patterns in $(d-1)$-dimensional
slices of thickness $n$ can be interpreted in a natural way as $d$-dimensional patterns having
the width $n$ in the first dimension.
We introduce the notation $\hat{p}$ for such an interpretation of a pattern $p$. More precisely,
for any $D\subseteq \Z^{d-1}$ and a $(d-1)$-dimensional pattern $p\in (G^n)^D$ over the alphabet $\G^n$
we denote by $\hat{p}\in \G^E$ the corresponding $d$-dimensional
pattern over $\G$ whose domain is
$E=\{1,\dots ,n\}\times D\subseteq \Z^d$ and
$p(\vec{u}) = (\hat{p}(1,\vec{u}), \hat{p}(2,\vec{u}), \dots ,\hat{p}(n,\vec{u}))$
for every $\vec{u}\in D$. For a subshift $X$ we then have that $p\in\Lang{\pi^{(n)}(X)}$ if and only if $\hat{p}\in\Lang{X}$. In particular,
using an algorithm for the membership of a pattern in $\Lang{X}$ we can also decide the membership of any given finite pattern in $\Lang{\pi^{(n)}(X)}$.
Based on Corollary~\ref{cor:grouplanguage} we then have immediately the following fact for groups shifts.

\begin{lemma}
\label{lem:projectionlanguage}
One can effectively decide for any given $d$-dimensional group shift $\X\subseteq \G^{\Z^d}$, any given $n\geq 1$ and any given
$(d-1)$-dimensional finite pattern $p\in (G^n)^D$ whether $p\in\Lang{\pi^{(n)}(\X)}$.
\qed
\end{lemma}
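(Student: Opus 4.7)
The plan is to reduce the membership question for $\Lang{\pi^{(n)}(\X)}$ to the membership question for $\Lang{\X}$, which is already decidable by Corollary~\ref{cor:grouplanguage}. The correspondence $p \mapsto \hat{p}$ described immediately before the lemma statement does exactly this conversion, so essentially all the work is already in place.

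Concretely, given the input $\X \subseteq \G^{\Z^d}$ (presented by a finite set of forbidden patterns), $n \geq 1$, and a finite pattern $p \in (\G^n)^D$ with $D \subseteq \Z^{d-1}$, the algorithm first constructs the $d$-dimensional pattern $\hat{p} \in \G^E$ over $E = \{1,\dots,n\} \times D$ whose values are determined componentwise by the identity $p(\vec{u}) = (\hat{p}(1,\vec{u}),\dots,\hat{p}(n,\vec{u}))$ for $\vec{u} \in D$. This is a purely syntactic rearrangement of the data of $p$ and is trivially effective. The algorithm then invokes the decision procedure of Corollary~\ref{cor:grouplanguage} on $\X$ and $\hat{p}$, and returns its answer.

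Correctness is exactly the equivalence $p \in \Lang{\pi^{(n)}(\X)} \iff \hat{p} \in \Lang{\X}$ that was recorded just before the lemma: a $d$-dimensional configuration $c \in \X$ contains $\hat{p}$ in some translated copy of $E$ if and only if its $n$-slice $\pi^{(n)}(c)$ contains $p$ in the corresponding translated copy of $D$. Since taking slices commutes with translations in the remaining $d-1$ coordinates, and since slicing at first-coordinate positions $1,\dots,n$ is equivalent up to translation to slicing at any other $n$ consecutive positions, this equivalence holds uniformly.

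There is no real obstacle here; the lemma is essentially a corollary of Corollary~\ref{cor:grouplanguage} via the $\hat{(\cdot)}$ encoding. The only point worth flagging is that the reduction enlarges the pattern by a factor of $n$ in one coordinate, so the algorithm's running time inherits whatever complexity Corollary~\ref{cor:grouplanguage} has on inputs of size $n \cdot |D|$; as the statement only asserts decidability, this is immaterial.
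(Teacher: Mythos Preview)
Your proposal is correct and matches the paper's own argument exactly: the paper states the equivalence $p\in\Lang{\pi^{(n)}(\X)} \iff \hat{p}\in\Lang{\X}$ in the text immediately preceding the lemma and then marks the lemma with a \qed, treating it as an immediate consequence of Corollary~\ref{cor:grouplanguage}. There is nothing to add.
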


Projections $\pi^{(n)}(X)$ are elementary slicing
operations that can be composed together, as well as with permutations of
coordinates, to obtain more general projections of subshifts into lower dimensional grids.
Very generally, for any subset
$E\subseteq\Z^d$ we call the restriction $c|_{E}$ the \emph{projection} of $c$ on $E$, and
the projection of a subshift $X$ on $E$ is $\pi_E(X)=\{c|_{E}\ |\ c\in X\}$.
We mostly use operation $\pi_E$ with sets of type
$E=D\times \Z^k$ for some $k<d$ and a finite $D\subseteq \Z^{d-k}$, and
we mostly apply $\pi_E$ to group shifts $\X\subseteq \G^{\Z^d}$.
The projection $\pi_E(\X)$ is then viewed in the natural manner
as the $k$-dimensional group shift over the finite group $\G^{D}$. One of the main results of this section is
Corollary~\ref{cor:projection}, stating that we can effectively construct $\pi_E(\X)$ for given $\X$ and $E=D\times \Z^k$.


\subsection*{Notations for projections that keep the dimension}

Let $\G=\G_1\times \G_2$ be a cartesian product of two finite groups. For any $c\in \G^{\Z^d}$ we let
$\psi^{(1)}(c)\in \G_1^{\Z^d}$ and $\psi^{(2)}(c)\in \G_2^{\Z^d}$
be the cell-wise projections to $\G_1$ and $\G_2$, respectively, defined by $c_{\vec{u}}=(\psi^{(1)}(c)_{\vec{u}}, \psi^{(2)}(c)_{\vec{u}})$ for all $\vec{u}\in\Z^d$. By abuse of notation, for any  $c^{(1)}\in \G_1^{\Z^d}$
and $c^{(2)}\in \G_2^{\Z^d}$ we denote by $(c^{(1)}, c^{(2)})$ the configuration $c\in (\G_1\times \G_2)^{\Z^d}$
such that $\psi^{(i)}(c)=c^{(i)}$ for $i=1,2$. We also use the similar notation on finite patterns
and implicitly use the obvious way to identify $\G_1^{D}\times \G_2^{D}$ and $(\G_1\times \G_2)^{D}$.

Clearly, for any group shift $\X\subseteq\G^{\Z^d}$, the sets
$\psi^{(1)}(\X)$ and $\psi^{(2)}(\X)$ are group shifts over $\G_1$ and $\G_2$, respectively.
A pattern $p\in (\G_1)^D$ is in the language of $\psi^{(1)}(\X)$ if and only if there is a pattern
$q\in (\G_2)^D$ such that $(p,q)\in\Patt{\X}{D}$.
Therefore we have the following counter part of Lemma~\ref{lem:projectionlanguage}.

\begin{lemma}
\label{lem:restrictionlanguage}
One can effectively decide for any given $d$-dimensional group shift $\X\subseteq (\G_1\times \G_2)^{\Z^d}$,
and any given $d$-dimensional finite pattern $p\in (\G_1)^D$ whether $p\in\Lang{\psi^{(1)}(\X)}$.
\qed
\end{lemma}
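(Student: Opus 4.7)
The plan is to reduce the decision problem directly to the membership problem for the group shift $\X$, which is already solved by Corollary~\ref{cor:grouplanguage}. The key observation, stated in the paragraph just above the lemma, is that a pattern $p \in \G_1^{D}$ belongs to $\Lang{\psi^{(1)}(\X)}$ if and only if there exists some $q \in \G_2^{D}$ such that the combined pattern $(p,q) \in \Lang{\X}$.

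Since the group $\G_2$ is finite and the domain $D$ is finite, the set $\G_2^{D}$ is a finite, explicitly enumerable collection of candidate patterns. My procedure would therefore take the input pattern $p$, iterate through every $q \in \G_2^{D}$ one by one, and for each candidate run the membership algorithm from Corollary~\ref{cor:grouplanguage} on the pattern $(p,q) \in (\G_1 \times \G_2)^{D}$ with respect to the given group shift $\X$. The algorithm returns yes as soon as some $q$ yields $(p,q) \in \Lang{\X}$, and returns no once all of the finitely many candidates have been exhausted without success. Correctness is immediate from the equivalence recalled above, and termination is guaranteed by the finiteness of $\G_2^{D}$ together with the termination of Corollary~\ref{cor:grouplanguage}.

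There is no real obstacle here: unlike the lower-dimensional projection case (which, as noted in the excerpt, will need the more substantial construction of Corollary~\ref{cor:projection}), projection along the state-set coordinate never enlarges the domain of the pattern, so exhaustive search over the missing coordinates is already a decision procedure. Consequently the lemma follows by a bounded search combined with one invocation of Corollary~\ref{cor:grouplanguage} per candidate.
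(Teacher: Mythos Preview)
Your proof is correct and is exactly the argument the paper intends: the lemma carries a \qed\ because the sentence preceding it already gives the reduction, namely $p\in\Lang{\psi^{(1)}(\X)}$ iff some $q\in\G_2^{D}$ has $(p,q)\in\Lang{\X}$, after which a finite search plus Corollary~\ref{cor:grouplanguage} decides the question. (Minor aside: your closing remark slightly overstates the contrast with Lemma~\ref{lem:projectionlanguage}, which is also proved by a direct reduction to Corollary~\ref{cor:grouplanguage} via the $\hat p$ correspondence and does not itself rely on Corollary~\ref{cor:projection}.)
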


Let $D,E$ be finite sets, $D\subseteq E$, and let $\X\subseteq (\G^E)^{\Z^d}$ be a group shift over the finite cartesian power
$\G^E$ of the group $\G$. The group $\G^E$ is isomorphic to $\G^D\times\G^{E\setminus D}$ in a natural manner,
and $\psi^{(1)}$ projects then $\X$
into $(\G^D)^{\Z^d}$. We denote this projection by $\psi_D$. Notice that $\pi_{D\times\Z^k}=\psi_{D}\circ\pi_{E\times\Z^k}$
so that the projection into $D\times\Z^k$ can be obtained as a composition of projections $\pi^{(n)}$ into slices, permutations
of coordinates, and a projection of the type $\psi^{(1)}$.

\subsection*{Effective constructions}

Our main technical result is that projections of group shifts can be effectively constructed. We state this as a two-part lemma.
Corollaries~\ref{cor:projection} and~\ref{cor:homorphicimage} that follow the lemma
provide clean statements that we use in the rest of the paper.

\begin{lemma}
\label{lem:main}
Let $d\geq 1$ be a dimension, and let $\G$ and $\G_1,\G_2$ be finite groups.
\begin{enumerate}
\item[(a)] For any given $d$-dimensional group shift $\X\subseteq \G^{\Z^d}$ and any given $n\geq 1$ one can effectively construct the
$d-1$ dimensional group shift $\pi^{(n)}(\X)\subseteq (\G^n)^{\Z^{d-1}}$.
\item[(b)] For any given $d$-dimensional group shift $\X\subseteq (\G_1\times \G_2)^{\Z^d}$
one can effectively construct the $d$-dimensional group shift $\psi^{(1)}(\X)\subseteq \G_1^{\Z^d}$.
\end{enumerate}
\end{lemma}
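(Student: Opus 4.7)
The plan is to construct the desired projection by building a decreasing chain of group shifts that converges to it from above, and then to recognize when that chain has stabilized using Corollary~\ref{cor:groupequivalnece}.

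I focus on part (b) first. For each $k\geq 1$ set $E_k=\{-k,\dots,k\}^d$ and let $P_k=\{p\in\G_1^{E_k}\ |\ p\notin\Lang{\psi^{(1)}(\X)}\}$; the set $P_k$ is computable by Lemma~\ref{lem:restrictionlanguage}. Define the SFT $Y_k=\XP_{P_k}$. A first key observation is that each $Y_k$ is in fact a \emph{group shift}: the complement $\Lang{\psi^{(1)}(\X)}\cap\G_1^{E_k}$ of $P_k$ is a subgroup of $\G_1^{E_k}$ because $\psi^{(1)}(\X)$ is itself a group, so the cell-wise group operation preserves membership in $Y_k$. Clearly $\psi^{(1)}(\X)\subseteq Y_{k+1}\subseteq Y_k$ for every $k$, and by compactness $\bigcap_k Y_k=\psi^{(1)}(\X)$. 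By Theorem~\ref{thm:nodecreasingchain}, the chain $Y_1\supseteq Y_2\supseteq\cdots$ of group shifts must stabilize after finitely many steps, and then its stable value is $\psi^{(1)}(\X)$.

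To turn this into an algorithm, the idea is to enumerate $k=1,2,\dots$, compute $Y_k$ from the computable set $P_k$, and use Corollary~\ref{cor:groupequivalnece} to test the equality $Y_k=Y_{k+1}$. The main technical difficulty, and where I expect the bulk of the work to live, is to show that such a \emph{local} equality already certifies $Y_k=\psi^{(1)}(\X)$, rather than being a transient coincidence in the chain; naively, $Y_k=Y_{k+1}$ only guarantees agreement of $\Lang{Y_k}$ with $\Lang{\psi^{(1)}(\X)}$ on $E_{k+1}$-shaped patterns. I expect to close this gap by combining two ingredients: the group structure, which lets one convert locally consistent patterns in $Y_k$ into global elements of $\psi^{(1)}(\X)$ by correcting via elements of the subgroup $\psi^{(1)}(\X)\subseteq Y_k$, and Theorem~\ref{thm:periodic}, which reduces membership in $\psi^{(1)}(\X)$ to checking its totally periodic points (whose $E_{k'}$-patterns for large $k'$ can be related back to $E_{k+1}$-patterns by periodicity together with the subgroup property).

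Part (a) follows the same template: given $\X\subseteq\G^{\Z^d}$ and $n\geq 1$, take $P_k\subseteq(\G^n)^{\{-k,\dots,k\}^{d-1}}$ to be the set of $(d-1)$-dimensional patterns not in $\Lang{\pi^{(n)}(\X)}$ (computable by Lemma~\ref{lem:projectionlanguage}), form $Y_k=\XP_{P_k}$, verify it is a group shift over $\G^n$ by the same cell-wise closure argument, observe that the chain decreases to $\pi^{(n)}(\X)$, apply Theorem~\ref{thm:nodecreasingchain} for finite-step stabilization, and detect the stable $k$ as above. One could also reduce (a) to (b) conceptually by viewing the $n$-slice projection as a change-of-alphabet operation, so that the genuinely new technical content is concentrated in part (b); the hardest step in either case remains the proof that equality of two successive $Y_k$'s is already a witness of convergence to the true projection.
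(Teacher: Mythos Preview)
Your upper-approximation scheme by the group shifts $Y_k$ matches the paper's starting point (and your observation that each $Y_k$ is a group shift is exactly the paper's Claim~\ref{claim2}). However, the core technical problem---how to detect that the chain has reached $\psi^{(1)}(\X)$---is left genuinely open in your proposal, and your suggested halting criterion $Y_k=Y_{k+1}$ is not justified. Consecutive equality in a decreasing chain of group shifts does \emph{not} a priori imply the chain has stabilized: nothing prevents $Y_k=Y_{k+1}\supsetneq Y_{k+2}$. You acknowledge this gap but the sketch you offer (``correct locally consistent patterns via the subgroup $\psi^{(1)}(\X)$'' and invoke density of periodic points) does not amount to an argument; in particular, there is no mechanism by which knowing $\Patt{Y_k}{E_{k+1}}=\Patt{\psi^{(1)}(\X)}{E_{k+1}}$ propagates to the same equality on $E_{k+2}$.

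The paper solves this by a completely different halting mechanism, organized as a \emph{mutual induction on the dimension} $d$ between (a) and (b). For (a) in dimension $d$, the halting test is whether the left and right width-$(n-1)$ projections $\psi_L(\XP_Q)$ and $\psi_R(\XP_Q)$ coincide; these are $(d-1)$-dimensional $\psi$-type projections, computable by the inductive hypothesis on (b), and equality is exactly the condition that the approximant's slices can be glued into a full $d$-dimensional element of $\X$ (Claim~\ref{claim1}). For (b) in dimension $d$, the paper first computes a \emph{radius of synchronization} $r$ (Claim~\ref{claim4}), which quantifies how far one must look so that a slice of $\id$'s in the $\G_1$-layer can always be extended by one more $\id$ (Claim~\ref{claimX}); the halting test is then $\pi^{(n)}(\XP_Q)=\pi^{(n)}(\Y)$ for $n=m+2r+1$, where both sides are computable using (a) in dimension $d$ and (b) in dimension $d-1$. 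Claim~\ref{claimY} shows this equality propagates to all larger widths, hence to the full shift. Your proposal has no analogue of this inductive bootstrapping, and your remark that (a) might reduce to (b) ``as a change-of-alphabet operation'' overlooks that $\pi^{(n)}$ drops the dimension while $\psi^{(1)}$ does not; in the paper the two parts genuinely depend on each other across dimensions.
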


\begin{proof}
The proof is by induction on dimension $d$. We first prove (a) for dimension $d$ assuming that (b) holds in dimension $d-1$, and then we prove (b) for dimension $d$ assuming
(a) holds in dimension $d$ and that (b) holds for dimension $d-1$.
To start the induction we observe that (b) trivially holds for dimension $d=0$: In this case group shifts over $\G$ are precisely subgroups of $\G$.
\bigskip

\noindent
\emph{Proving (a) for dimension $d$ assuming (b) holds for dimension $d-1$:}
Let a width $n\geq 1$ and a group shift $\X\subseteq \G^{\Z^d}$ be given
(in terms of a finite set $P$ of forbidden patterns such that $\X=\XP_P$). Let us first assume that $n$ is at least the width of the patterns in $P$
so that we can assume that all patterns in $P$ have the same domain
$\{1,\dots,n\}\times D$ for some finite $D\subseteq\Z^{d-1}$. (Note that we can effectively grow the domain of each forbidden pattern by forbidding
instead all patterns with the larger domain that extend the original pattern. Thus a common domain can be taken for all elements in $P$.
We can also shift the domains of the patterns.)

To construct the $(d-1)$-dimensional projection $\Y=\pi^{(n)}(\X)$
we effectively enumerate and forbid patterns that are not in the language of $\Y$. We accumulate
the forbidden patterns in a set $Q$ that we initialize to be the empty set in the beginning of the process.
Let $D_1, D_2,\dots$ be an effective enumeration of all finite subsets of $\Z^{d-1}$ with $D_1=D$.
For each $i=1,2,\dots$ in turn we go through all (finitely many) patterns $q$ over $\G^n$ having shape $D_i$
and check, using Lemma~\ref{lem:projectionlanguage}, whether  $q$ is in $\Lang{\Y}$. If not, we
add $q$ in the set $Q$. This way, at any time, $Q$ only contains patterns outside of $\Lang{\Y}$ and hence forbidding patterns in $Q$ gives an
upper approximation $\XP_Q\supseteq \Y$. Since  $\Y$ is a group shift and therefore of finite type, by
systematically enumerating the patterns in the complement of $\Lang{\Y}$ we  eventually reach a set $Q$ such that $\Y=\XP_Q$.

The reason why we process all patterns for each shape $D_i$ before moving to the next shape $D_{i+1}$
is the observation that this way the subshift $\XP_Q$ is guaranteed to be a group shift after finishing processing
$D_i$. We have the following general fact:

\begin{claim}
\label{claim2}
Let $\X\subseteq\G^{\Z^d}$ be any group shift in any dimension $d$, and let $D\subseteq\Z^d$ be finite. For $Q=\G^D\setminus\Patt{\X}{D}$
the subshift $\XP_Q$ is a group shift and $\X\subseteq\XP_Q$.
\end{claim}

\begin{proof}[Proof of Claim~\ref{claim2}]
Clearly $\XP_Q$ is a subshift and $\X\subseteq\XP_Q$. We just have to show that $\XP_Q$ is a group.
We have $c\in\XP_Q$ if and only if $\Patt{c}{D}\subseteq\Patt{\X}{D}$. The result now follows from the fact that
$\Patt{\X}{D}$ is a subgroup of $\G^D$.
\end{proof}

Intersections of group shifts are group shifts so Claim~\ref{claim2} indeed implies that after fully processing
any number of domains $D_1,\dots ,D_i$ the resulting subshift $\XP_Q\supseteq \Y$ is a group shift. Note also that
$D_1=D$ guarantees that already after the first round $i=1$ we have in $Q$ all the patterns of $P$.

As mentioned above, we are guaranteed to eventually have enough forbidden patterns in $Q$ to have $\Y=\XP_Q$.
The problem is to identify when we have enumerated enough patterns and reached such a set $Q$. Fortunately this can be detected by
checking that the left and the right slices of width $n-1$ of the upper approximation $\XP_Q$ are identical with each other,
as detailed below.

Let us introduce notations $\psi_L$ and $\psi_R$ for the operations of
extracting the left and the right slices of width $n-1$. More precisely,
for a configuration $c=(c^{(1)},\dots ,c^{(n)})\in (\G^n)^{\Z^{d-1}}$ of thickness $n$,
where $c^{(i)}\in \G^{\Z^{d-1}}$ are the single cell wide slices of $c$,
we define
$\psi_L(c)=(c^{(1)},\dots ,c^{(n-1)})$ and $\psi_R(c)=(c^{(2)},\dots ,c^{(n)})$, respectively. Both are elements of $(\G^{n-1})^{\Z^{d-1}}$.

\begin{claim}
\label{claim1}
$\XP_Q=\Y$ if and only if $\psi_L(\XP_Q)=\psi_R(\XP_Q)$.
\end{claim}

\begin{proof}[Proof of Claim~\ref{claim1}]
$\psi_L(\Y)=\psi_R(\Y)=\pi^{(n-1)}(\X)$ so the implication from left to right is clear. For the converse direction,
let $c\in \XP_Q$ be arbitrary. By the assumption $\psi_L(\XP_Q)=\psi_R(\XP_Q)$ there exists a bi-infinite sequence $\dots ,c_{-1},c_0, c_1,\dots$
of configurations such that $c_0=c$ and for all $i\in\Z$ we have $c_i\in \XP_Q$ and $\psi_R(c_i)=\psi_L(c_{i+1})$.
Configurations $c_i$ and $c_{i+1}$ overlap properly so that there is a $d$-dimensional configuration $c'\in \G^{\Z^d}$
whose consecutive $n$-slices are $\dots ,c_{-1},c_0, c_1,\dots$, that is, $c_i = \pi^{(n)}(\tau_{i\vec{e}_1}(c'))$ for all
$i\in\Z$. See Figure~\ref{fig:overlaps} for an illustration of $c'$.
Since each forbidden pattern in $P$ is also in $Q$, none of the slices contain such a forbidden pattern and hence $c'\in \X$.
Now $c=c_0=\pi^{(n)}(c')$ so that $c\in \pi^{(n)}(\X)=\Y$. We have shown that $\XP_Q\subseteq \Y$. The opposite inclusion holds since $\XP_Q$ is an
upper approximation of $\Y$.
\end{proof}

\begin{figure}[ht]
\begin{center}
\includegraphics[width=0.8\textwidth]{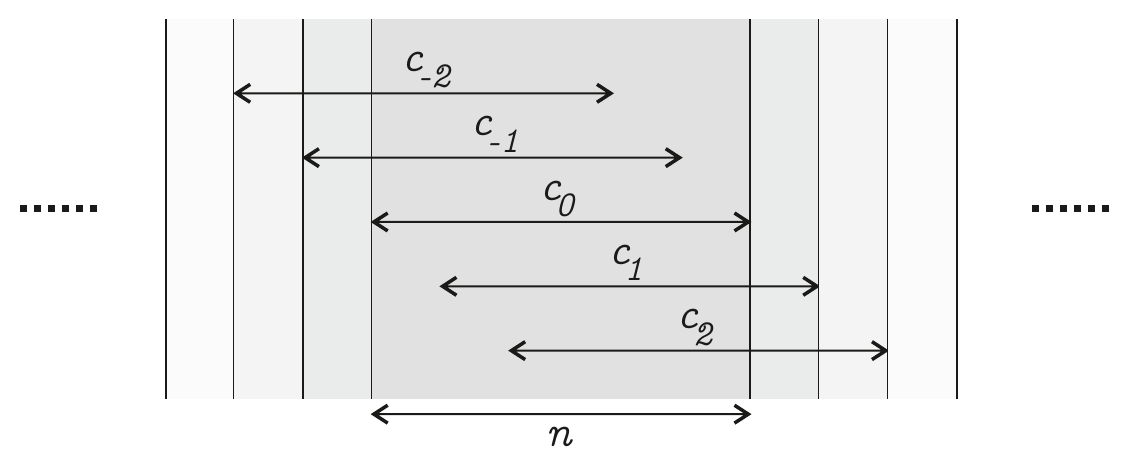}
\end{center}
\caption{An illustration of the overlapping $n$-slices forming the configuration $c'$ in the proof of Claim~\ref{claim1}.}
\label{fig:overlaps}
\end{figure}

Both $\psi_L$ and $\psi_R$ are projection operations of type (b) of the present lemma,
so by the inductive hypotheses and the fact that $\XP_Q$ is a $(d-1)$-dimensional group shift, the
group shifts $\psi_L(\XP_Q)$ and $\psi_R(\XP_Q)$ can be effectively constructed. Moreover, equality
of group shifts is decidable so that the condition $\psi_L(\XP_Q)=\psi_R(\XP_Q)$ can be effectively checked.
In conclusion, each time our algorithm finishes with adding patterns of shape $D_i$ in $Q$ it checks whether
$\psi_L(\XP_Q)=\psi_R(\XP_Q)$ holds for the current group shift $\XP_Q$. The algorithm stops and returns set $Q$
once equality is reached.
This finishes the description of the algorithm for case (a), provided $n$ is large enough to have all patterns $P$ in a
slice of width $n$. If $n$ is smaller, we first execute the algorithm for large enough width $m>n$ and effectively compute the further projection
$\pi^{(n)}(\X)=\psi_L^{m-n}(\pi^{(m)}(\X))$ to slices of width $n$. The projection can be effectively computed by the inductive hypothesis because it is a $(d-1)$-dimensional operation of type (b) of the present lemma.
\bigskip

\noindent
\emph{Proving (b) for dimension $d$ assuming that (a) holds for dimension $d$ and that (b) holds for dimension $d-1$:}
Let $\X\subseteq (\G_1\times\G_2)^{\Z^d}$ be given (in terms of a finite set $P$ of forbidden patterns such that $\X=\XP_P$).
To construct the
$d$-dimensional projection $\Y=\psi^{(1)}(\X)$ we -- analogously to the proof of case (a) above --
use Lemma~\ref{lem:restrictionlanguage} to
effectively enumerate patterns that are not in the language of $\Y$, thus obtaining upper approximations
of $\Y$ by subshifts $\XP_Q$. We process all patterns of a shape $D_i$ before moving on to the next shape $D_{i+1}$.
This guarantees -- as proved in Claim~\ref{claim2} above  -- that after finishing with each shape $D_i$ the shift $\XP_Q$
is a group shift.

We eventually reach a set $Q$ such that $\Y=\XP_Q$, but the challenge is again to identify
when we have reached such $Q$.
We establish this by proving that we can
effectively compute a number $n$ such that $\Y=\XP_Q$ if and only if  $\pi^{(n)}(\XP_Q)=\pi^{(n)}(\Y)$.

Once number $n$ is known, the projection $\pi^{(n)}(\XP_Q)$ can be effectively constructed by the inductive hypothesis stating that
case (a) of the present lemma holds in dimension $d$. Indeed, $\XP_Q$ is a known $d$-dimensional group shift.
The projection $\pi^{(n)}(\Y)$ can also be effectively constructed
as projections $\psi^{(1)}$ and $\pi^{(n)}$ commute, so that we can first construct $\pi^{(n)}(\X)$ (using the inductive
hypothesis that case (a) of the present lemma holds in dimension $d$) and then we apply $\psi^{(1)}$ on the $(d-1)$-dimensional
group shift $\pi^{(n)}(\X)$ (using the inductive
hypothesis that case (b) of the present lemma holds in dimension $d-1$) to obtain $\pi^{(n)}(\Y)$.

All that remains is to compute a sufficiently large $n$ for the implication
$$\pi^{(n)}(\XP_Q)=\pi^{(n)}(\Y) \Longrightarrow \XP_Q=\Y$$
to hold.

First a note on notations: Recall that we denote for any $c\in\G_1^{\Z^d}$ and $e\in\G_2^{\Z^d}$
by $(c,e)$ the configuration in $(\G_1\times \G_2)^{\Z^d}$ such that $\psi^{(1)}(c,e)=c$ and
$\psi^{(2)}(c,e)=e$. We also then denote for any $c\in (\G^n)^{\Z^d}$ and $c'\in (\G^m)^{\Z^d}$
by $(c,c')$ the concatenated configuration in $(\G^{n+m})^{\Z^d}$, by the understanding that $\G^{n+m}=\G^n\times\G^m$.
In the following we are going to mix both types of concatenations. For example, for
$c\in(\G_1^n)^{\Z^d}$, $c'\in (\G_1^m)^{\Z^d}$, $e\in(\G_2^n)^{\Z^d}$ and $e'\in(\G_2^m)^{\Z^d}$
we may write $((c,e),(c',e'))$ for a concatenated configuration in  $((\G_1\times \G_2)^{n+m})^{\Z^d}$, but also
$((c,c'),(e,e'))$ for the same configuration, now expressed in $(\G_1^{n+m}\times \G_2^{n+m})^{\Z^d}$. To help the reader
in the task of parsing such expressions, we use the notation $[c|c']$ for the second type of concatenations, with the idea that $n$-slices can be visualized as strips in the vertical direction and the vertical line $|$ is a ``separator'' between concatenated
vertical strips. So the two examples above will be written as
$[(c,e)|(c',e')]$ and $([c|c'],[e|e'])$, respectively.

Also note that for configurations $c$ and $e$ of the same group shift, say $c,e\in (\G^n)^{\Z^d}$, the notation $ce$
is not for the concatenation of the strips but it is for the cell-wise product of the configurations, \emph{i.e.}, for the product in the
group $(\G^n)^{\Z^d}$.

For any group shift $\U$ over
the alphabet $\G_1\times \G_2$ we denote by $\setker{\U}$ the set of configurations $c$ over $\G_2$
such that $(\id,c)\in \U$. Because $\setker{\U}=\psi^{(2)}(\kernel{\psi^{(1)}}\cap \U)$ and
because projections $\psi^{(i)}$ are group shift homomorphisms the set
$\setker{\U}$ is a group shift.

\begin{claim}
\label{claim3}
For any given group shift $\U\subseteq (\G_1\times \G_2)^{\Z^k}$, in any dimension $k$, one can effectively
construct $\setker{\U}$.
\end{claim}

\begin{proof}[Proof of Claim~\ref{claim3}]
By Lemma~\ref{lem:kernel} we can effectively construct $\kernel{\psi^{(1)}}$. Intersections of subshifts of finite type can
be effectively constructed (simply take the union of the defining sets of forbidden patterns of the two SFTs).
This means that $\U'=\kernel{\psi^{(1)}}\cap \U$ can be effectively constructed. Let $R$ be the constructed
set of finite patterns such that $\U'=\XP_R$.
All configurations in $\U'$ have $\id$ in their first components so
to define $\psi^{(2)}(\U')$ it is enough to forbid for all $(\id,p)\in R$ the pattern $p$.
\end{proof}


After these notations we can proceed with the proof. Let $m$ be a number such that the
forbidden patterns in set $P$ that defines $\X$ fit in a slice of thickness $m$, that is,
the domain of each forbidden pattern in $P$ is a subset of $\{1,\dots ,m\}\times\Z^{d-1}$.
Let us call a positive integer $r$ a \emph{radius of synchronization} if
for all $w\in \G_2^{\{1,\dots ,m\}\times\Z^{d-1}}$ holds the implication
\begin{equation}
\label{eq1}
\begin{array}{l}
(\exists u,v\in \G_2^{\{1,\dots ,r\}\times\Z^{d-1}})\ [u|w|v] \in \setker{\pi^{(m+2r)}(\X)}\hspace*{5mm}
\vspace*{3mm}
\\
\hspace*{50mm} \Longrightarrow \hspace*{5mm} w\in\pi^{(m)}(\setker{\X}).
\end{array}
\end{equation}
(See Figure~\ref{fig1} for an illustration.)

\begin{figure}[ht]
\begin{center}
\includegraphics[width=0.9\textwidth]{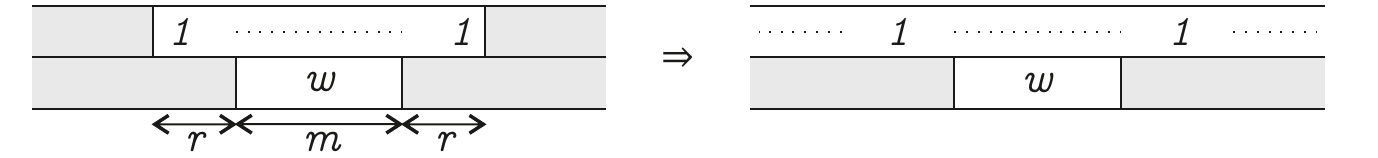}
\end{center}
\caption{An illustration of the implication (\ref{eq1}). Upper and lower layers are configurations over $\G_1$ and $\G_2$, respectively. Letter $r$ is a radius of  synchronization if for every $w$ for which the left situation exists in $\X$ also the right situation exists in $\X$. The picture depicts only the first dimension -- each letter represents an entire $(d-1)$-dimensional configuration.
}
\label{fig1}
\end{figure}

\begin{claim}
\label{claim4}
A radius of synchronization exists, and we can effectively find one.
\end{claim}

\begin{proof}[Proof of Claim~\ref{claim4}]
For any $r$, let us denote by $\U_r$ the set of $w\in \G_2^{\{1,\dots ,m\}\times\Z^{d-1}}$ that satisfy the left-hand-side of
implication of (\ref{eq1}), and by $\U$ the set of those that satisfy the right-hand-side. Now $\U=\pi^{(m)}(\setker{\X})$ and
$\U_r=\psi(\setker{\pi^{(m+2r)}(\X)})$
where $\psi$ is the projection in the central segment of length
$m$. It follows that $\U_r$ and $\U$
are $d-1$-dimensional group shifts. Group shifts $\U_r$ form a decreasing chain $\U_1\supseteq \U_2\supseteq\dots$ so by
Theorem~\ref{thm:nodecreasingchain} there exists $k$ such that $\U_r=\U_k$ for all $r\geq k$. By a simple compactness argument we then also have that $\U=\U_k$: if $w\in\U_k$ then for every $r\geq k$ there exists $c_r\in\X$ as in the left of Figure~\ref{fig1},
so that a limit of a converging subsequence of $c_k,c_{k+1},\dots$ is as in the right of Figure~\ref{fig1}, proving that $w\in\U$.
This proves that $k$ is a radius of synchronization.

To find a radius of synchronization we enumerate $r=1,2,\dots$ and test for each $r$ whether $\U_r=\U$. This can be effectively
tested: First, by Claim~\ref{claim3} the set $\setker{\X}$ can be constructed and then
by the inductive hypothesis that (a) holds in dimension $d$ we can apply $\pi^{(m)}$ to form $\U$. Second,
by the inductive hypothesis that (a) holds in dimension $d$ we can construct $\pi^{(m+2r)}(\X)$, by Claim~\ref{claim3}
we can build $\setker{\pi^{(m+2r)}(\X)}$, and finally by the inductive hypothesis that (b) holds in dimension $d-1$
we apply $\psi$ to construct $\U_r$. So both $\U$ and $\U_r$ can be effectively constructed, and by Corollary~\ref{cor:groupequivalnece}(b) we can test whether they are equal.
\end{proof}

The importance of the radius of synchronization comes from the fact that sufficiently wide slices of identities $\id$ can be extended.
\begin{claim}
\label{claimX}
Let $r$ be a radius of synchronization. Then for any slice $x\in\pi^{(k)}(\Y)$ of any width $k$ holds the implication
$$
[x|\id^{m+2r}]\in\pi^{(k+m+2r)}(\Y)\hspace*{5mm} \Longrightarrow \hspace*{5mm} [x|\id^{m+2r+1}]\in\pi^{(k+m+2r+1)}(\Y).
$$
\end{claim}
\begin{proof}[Proof of Claim~\ref{claimX}]
Assume the left-hand-side of the implication.
Recalling that $\Y=\psi^{(1)}(\X)$ there is a configuration $c\in\X$ such that $\pi^{(k+m+2r)}(c) = [(x,y)| (\id^r,u)|(\id^m,w)|(\id^r,v)]$
for some slices $y,u,w,v$ (of thicknesses $k$, $r$, $m$ and $r$, respectively) over $\G_2$.
In particular then $[u|w|v]\in \setker{\pi^{(m+2r)}(\X)}$,
so that the implication (\ref{eq1}) gives that $w\in\pi^{(m)}(\setker{\X})$.
By the definition of $\setker{\X}$
there is a configuration $e\in \X$ such that $\pi^{(k+m+2r+1)}(e)=[(\id^k,y')| (\id^r,u')|(\id^m,w)|(\id^r,v')|(\id,a)]$
for slices $y',u',v'$ and $a$ of thicknesses $k$, $r$,  $m$, $r$ and $1$, respectively.
The forbidden patterns in the set $P$ that defines $\X$ have thickness at most $m$, so we can cut and exchange
tails at the common slice $(\id^m,w)$ of $c$ and $e$ without introducing any forbidden patterns. See Figure~\ref{fig:cutandpaste}
for an illustration of the cut and exchange between $c$ and $e$ along their common slice.
This implies that the slice $[(x,y)| (\id^r,u)|(\id^m,w)|(\id^r,v')|(\id,a)]$ of thickness $k+m+2r+1$
is in $\pi^{(k+m+2r+1)}(\X)$, providing the result that
$[x|\id^{m+2r+1}]\in\pi^{(k+m+2r+1)}(\Y)$.

\end{proof}

\begin{figure}[ht]
\begin{center}
\includegraphics[width=0.9\textwidth]{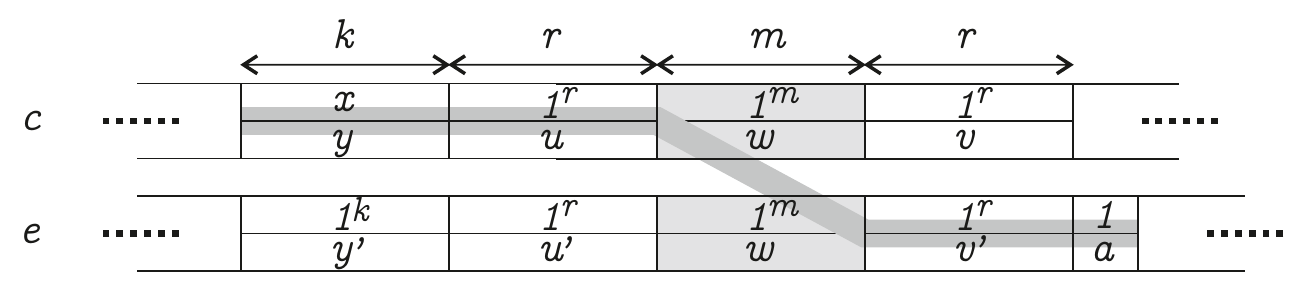}
\end{center}
\caption{An illustration of cutting and reconnecting halves of configurations $c$ and $e$ along a common slice of width $m$ in the proof of Claim~\ref{claimX}.}
\label{fig:cutandpaste}
\end{figure}

Let $n=m+2r+1$ where $r$ is the radius of synchronization that we computed for $\X$.
This turns out to be a sufficient thickness for our purpose of halting the algorithm.

\begin{claim}
\label{claimY}
If $\pi^{(n)}(\XP_Q)=\pi^{(n)}(\Y)$ then $\pi^{(k)}(\XP_Q)=\pi^{(k)}(\Y)$ for all $k\geq n$.
\end{claim}

\begin{proof}[Proof of Claim~\ref{claimY}]
We prove this by induction on $k$. Case $k=n$ is clear. For the inductive step, suppose that
$\pi^{(k)}(\XP_Q)=\pi^{(k)}(\Y)$ is known for some $k\geq n$ and consider slices of width $k+1$.
Containment $\pi^{(k+1)}(\Y)\subseteq \pi^{(k+1)}(\XP_Q)$
is clear since $\Y\subseteq\XP_Q$. We just need to prove that $\pi^{(k+1)}(\XP_Q)\subseteq \pi^{(k+1)}(\Y)$.

Let $c\in \pi^{(k+1)}(\XP_Q)$ so that $c=[a|x|b]$ for $a,b\in \pi^{(1)}(\XP_Q)$ and $x\in\pi^{(k-1)}(\XP_Q)$.
We have $[a|x], [x|b]\in \pi^{(k)}(\XP_Q)$ so that by the inductive hypothesis $[a|x], [x|b]\in\pi^{(k)}(\Y)$.
Because $[x|b]$ is a slice in a configuration of $\Y$, there exists $a'\in\pi^{(1)}(\Y)$ such that
$[a'|x|b]\in \pi^{(k+1)}(\Y)$. Because $\Y$ is a group shift the product $[a|x]\, [a'|x]^{-1}=[aa'^{-1}|\id^{k-1}]$
is in $\pi^{(k)}(\Y)$. Because $k-1\geq m+2r$ we get from Claim~\ref{claimX} that $[aa'^{-1}|\id^{k}]$
is in $\pi^{(k+1)}(\Y)$. But this is all we need: we get
$[aa'^{-1}|\id^{k}]\, [a'|x|b] = [a|x|b]=c$ is in $\pi^{(k+1)}(\Y)$ as claimed.

\end{proof}

It is now a simple compactness argument to show that if $\pi^{(k)}(\XP_Q)=\pi^{(k)}(\Y)$ for all $k\geq n$ then $\XP_Q=\Y$.
So our algorithm constructs sets $Q$ until condition  $\pi^{(n)}(\XP_Q)=\pi^{(n)}(\Y)$  is satisfied for $n=m+2r+1$.
At that time we can stop  because we know that we have reached the situation $\XP_Q=\Y$. This completes the proof of Lemma~\ref{lem:main}.\qed
\end{proof}

\bigskip

Lemma~\ref{lem:main} is used in the rest of the paper via the following two corollaries.
The first corollary states that arbitrary projections can be effectively implemented on group shifts.

\begin{corollary}
\label{cor:projection}
Given a $d$-dimensional group shift $\X\subseteq \G^{\Z^d}$ and given $k<d$ and a finite $D\subseteq \Z^{d-k}$
we can effectively construct the $k$-dimensional group shift $\pi_{D\times\Z^{k}}(\X)\subseteq (\G^D)^{\Z^{k}}$.
\end{corollary}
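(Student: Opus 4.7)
The plan is to decompose the general projection $\pi_{D\times\Z^k}$ into a sequence of elementary operations that are already available via Lemma~\ref{lem:main}, namely projections of type $\pi^{(n)}$ (which peel off one dimension), projections of type $\psi^{(1)}$ (which project to a factor of the state group), and coordinate permutations (which are trivial relabelings of $\Z^d$ and hence preserve effective representability of group shifts).

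First I would choose an integer $n$ large enough that, up to a translation of $\X$ along the first $d-k$ coordinate directions, we may assume $D\subseteq\{1,\dots,n\}^{d-k}$. Translations act on the set of forbidden patterns in an obvious effective way, so this is a harmless preprocessing step. After this reduction, the goal is to produce $\pi_{\{1,\dots,n\}^{d-k}\times\Z^k}(\X)$, and then cut it down to $\pi_{D\times\Z^k}(\X)$.

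Next I would iterate Lemma~\ref{lem:main}(a) a total of $d-k$ times, interleaved with coordinate permutations. Applying $\pi^{(n)}$ to $\X$ yields an effectively constructible $(d-1)$-dimensional group shift over $\G^n=\G^{\{1,\dots,n\}}$. A coordinate permutation brings the next target direction to the front, and a further application of $\pi^{(n)}$ yields a $(d-2)$-dimensional group shift over $(\G^n)^n=\G^{\{1,\dots,n\}^2}$. Iterating, after $d-k$ such rounds we obtain, by repeated use of Lemma~\ref{lem:main}(a), an explicit representation (as a finite set of forbidden patterns) of the $k$-dimensional group shift $\pi_{\{1,\dots,n\}^{d-k}\times\Z^{k}}(\X)$ over the finite group $\G^{\{1,\dots,n\}^{d-k}}$.

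Finally I would apply Lemma~\ref{lem:main}(b) to restrict the state set from $\G^{\{1,\dots,n\}^{d-k}}$ to $\G^D$ via the natural factorization $\G^{\{1,\dots,n\}^{d-k}}\cong \G^D\times\G^{\{1,\dots,n\}^{d-k}\setminus D}$; this is precisely the operation $\psi_D$ described just before the statement of Lemma~\ref{lem:main}, and it is effectively realizable by (b). Composing these steps yields an effective construction of $\pi_{D\times\Z^k}(\X)$. There is no real obstacle here beyond bookkeeping: all the substantial work has been absorbed into Lemma~\ref{lem:main}, and the point of the corollary is simply to record that the elementary operations proved there suffice to implement any projection of the desired form.
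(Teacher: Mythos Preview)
Your proposal is correct and matches the paper's own proof essentially step for step: translate so that $D\subseteq\{1,\dots,n\}^{d-k}$, apply Lemma~\ref{lem:main}(a) a total of $d-k$ times with coordinate permutations to obtain $\pi_{\{1,\dots,n\}^{d-k}\times\Z^k}(\X)$, and finish with $\psi_D$ via Lemma~\ref{lem:main}(b). There is nothing to add.
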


\begin{proof}
By shift invariance of $\X$ we arbitrarily translate $D$, so we may assume without loss of generality that
$D$ is a subset of $E=\{1,\dots ,n\}^{d-k}$ for some $n$.
By applying $d-k$ times Lemma~\ref{lem:main}(a), permuting the coordinates as needed, we can effectively construct
$\X'=\pi_{E\times \Z^{k}}(\X)$. Now $\pi_{D\times\Z^{k}}(\X)=\psi_{D}(\X')$, and by Lemma~\ref{lem:main}(b) the
projection $\psi_{D}$ from $\G^E$ to $\G^D$ can be effectively implemented.
\end{proof}

The second corollary tells that images of group shifts under group cellular automata can be also effectively constructed.

\begin{corollary}
\label{cor:homorphicimage}
Given a $d$-dimensional group shift $\X\subseteq \G^{\Z^d}$ and given a group shift homomorphism $F:\X\longrightarrow \HH^{\Z^d}$
one can effectively construct the group shift $F(\X)\subseteq \HH^{\Z^d}$.
\end{corollary}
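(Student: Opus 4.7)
The plan is to factor the image $F(\X)$ through the graph of $F$, then apply Lemma~\ref{lem:main}(b) to obtain $F(\X)$ as a coordinate projection.

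First I would consider the graph
$$\Gamma_F = \{(c, F(c))\ |\ c\in\X\} \subseteq (\G\times\HH)^{\Z^d}.$$
This is a $d$-dimensional group shift over the finite group $\G\times\HH$: it is topologically closed (since $F$ is continuous), translation invariant (since $F$ commutes with translations), and it is a subgroup because $F$ is a group homomorphism --- the coordinate-wise product of $(c_1,F(c_1))$ and $(c_2,F(c_2))$ equals $(c_1c_2,F(c_1c_2))\in\Gamma_F$.

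Second, I would produce a finite set of forbidden patterns describing $\Gamma_F$ as an SFT, so that Lemma~\ref{lem:main} can be applied. Starting from a given finite $P$ with $\X=\XP_P$ and the given local rule $f:\Patt{\X}{N}\longrightarrow\HH$ of $F$ (after enlarging $N$ if necessary so that $\vec{0}\in N$), the forbidden set consists of: (i) every pattern $(p,q)$ over $\G\times\HH$ whose $\G$-component $p$ belongs to $P$, with $q$ arbitrary on the same shape (this lifts the constraints of $\X$ to the product alphabet); and (ii) for each $p\in\Patt{\X}{N}$ and each $q\in\HH^N$ with $q(\vec{0})\neq f(p)$, the pattern $(p,q)$ (this enforces consistency with the local rule). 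Both collections are finite and can be produced from the input, and the resulting SFT equals $\Gamma_F$.

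Finally, $F(\X)$ is the projection of $\Gamma_F$ onto its $\HH$-coordinate, so by Lemma~\ref{lem:main}(b) (applied with the two product factors swapped) it can be effectively constructed from the SFT description of $\Gamma_F$. No real obstacle arises: checking that $\Gamma_F$ is a group shift and writing it as an SFT are immediate from the definitions, and all the genuine work --- effectively carrying out the projection of a group shift onto one of two product factors --- has already been done in Lemma~\ref{lem:main}.
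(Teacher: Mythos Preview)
Your proposal is correct and follows essentially the same route as the paper: build the graph $\{(c,F(c))\ |\ c\in\X\}$ as a group shift over $\G\times\HH$, give an explicit SFT description of it from $P$ and the local rule $f$, and then apply Lemma~\ref{lem:main}(b) to project onto the $\HH$-coordinate. The only cosmetic difference is that the paper first normalizes so that the domain of the forbidden patterns and the neighborhood $N$ coincide, which lets it phrase the forbidden set for the graph slightly more uniformly.
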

\begin{proof}
Let $\X=\XP_P$ where $P$ is the given finite set of forbidden patterns that defines $\X$, and let $F=F_f$ where
$f:\Patt{\X}{N}\longrightarrow \HH$ is the given local rule of $F$ with a neighborhood $N$. We can pad symbols to patterns to
grow their domains, so we can assume without loss of generality that all patterns in $P$ have the same domain $D$,
that the neighborhood is the same set $N=D$, and that $\vec{0}\in D$.

We first effectively construct $\X' = \{(c,F(c))\ |\ c\in\X\}\subseteq (\G\times\HH)^{\Z^d}$.
This is a group shift over group $\G\times\HH$ because $F$ is a homomorphism. It is defined by
forbidding all patterns $(p,q)\in (\G\times\HH)^D$ where $p\not\in\Patt{\X}{D}$, or $p\in\Patt{\X}{D}$ but $q(\vec{0})\neq f(p)$.
So $\X'$ can indeed be effectively constructed.
By Lemma~\ref{lem:main}(b) we can then effectively compute the second projection $F(\X)=\psi^{(2)}(\X')$.
\end{proof}

\section{Algorithms for group cellular automata}
\label{sec:groupCA}

In this part we apply the algorithms developed for group shifts to analyze group cellular automata. The basic idea is to view the set of space-time diagrams as  a higher dimensional group shift and to effectively compute one-dimensional projections in the temporal direction. This way, trace subshifts are obtained. As these are one-dimensional group shifts, and hence of finite type,
the long term dynamics can be analyzed. A projection in the spatial dimensions provides the limit set of the cellular automaton.

We first define the central concepts of space-time diagrams, traces and limit sets, and show that they can be effectively constructed. Then we use this to prove properties and algorithms concerning several dynamical properties of group cellular automata.
We refer to~\cite{casurvey,kurkaBOOK} for more details and known results on the dynamical properties we consider.

\subsection*{Space-time diagrams}

Let $\X\subseteq\G^{\Z^d}$ be a $d$-dimensional group shift and let $F:\X\longrightarrow \X$ be a group cellular automaton on $\X$.
A bi-infinite \emph{orbit} of $F$ is a sequence $\dots c^{(-1)},c^{(0)},c^{(1)},\dots$ of configurations
$c^{(i)}\in\X$ such that $c^{(i+1)}=F(c^{(i)})$
for all $i\in\Z$. Such an orbit can be viewed as the $(d+1)$-dimensional configuration $c\in\G^{\Z^{d+1}}$ by concatenating the
configurations $c_i$ one after the other along the additional dimension, that is,
$c_{\vec{u},i}=c^{(i)}_{\vec{u}}$ for all $i\in\Z$ and $\vec{u}\in \Z^d$. The first $d$ dimensions are spatial dimensions while the
$(d+1)$st dimension is the temporal dimension. The configuration $c$ is a \emph{space-time diagram} of
the cellular automaton $F$. Note that the orbits and space-time-diagrams are temporally bi-infinite.
The set of all space-time diagrams of $F$ is denoted by $\SP{F}$. Because $F$ is
a group homomorphism we have the following.

\begin{lemma}
\label{lem:SPisgroupshift}
$\SP{F}\subseteq\G^{\Z^{d+1}}$ is a group shift.
\qed
\end{lemma}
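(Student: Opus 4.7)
The plan is to verify directly that $\SP{F}$ satisfies the three defining properties of a group shift: it is (i) topologically closed, (ii) translation invariant under the full shift action of $\Z^{d+1}$, and (iii) a subgroup of $(\G^{\Z^{d+1}}, \cdot)$. Since $\X$ is itself a group shift, and $F$ is a group shift homomorphism, each of these is a short verification; I would not expect any deep obstacle here, and in fact it will be convenient to package (i) and (ii) together by exhibiting $\SP{F}$ as a subshift of finite type using the local data already available.

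First I would show (i) and (ii) simultaneously via a forbidden-pattern description. Let $P \subseteq \G^D$ be a finite set of forbidden patterns defining $\X = \XP_P$, which exists by Theorem~\ref{thm:groupSFT}, and let $f : \Patt{\X}{N} \to \G$ be a local rule for $F$ with neighborhood $N \subseteq \Z^d$. Define a set $P'$ of forbidden $(d{+}1)$-dimensional patterns as follows: include every ``spatial'' pattern obtained by placing a copy of some $p \in P$ in a single time slice, and include every ``temporal'' pattern of shape $(N \times \{0\}) \cup \{(\vec 0, 1)\}$ in which the value at $(\vec 0, 1)$ disagrees with $f$ applied to the values on $N \times \{0\}$. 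Then a configuration $c \in \G^{\Z^{d+1}}$ avoids all patterns in $P'$ precisely when each time slice $c^{(i)}$ lies in $\X$ and satisfies $c^{(i+1)} = F(c^{(i)})$, i.e., precisely when $c \in \SP{F}$. Thus $\SP{F}$ is an SFT, hence closed and shift invariant under both the spatial $\Z^d$ and the temporal $\Z$ translations, and therefore under $\Z^{d+1}$.

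For (iii), let $c, e \in \SP{F}$ and let $i \in \Z$. Viewing the $i$-th time slice as $c^{(i)}, e^{(i)} \in \X$, the cellwise product $(ce)^{(i)} = c^{(i)} e^{(i)}$ lies in $\X$ because $\X$ is a group shift, and using that $F$ is a group homomorphism we get
\[
(ce)^{(i+1)} = c^{(i+1)} e^{(i+1)} = F(c^{(i)})\, F(e^{(i)}) = F\bigl(c^{(i)} e^{(i)}\bigr) = F\bigl((ce)^{(i)}\bigr),
\]
so $ce \in \SP{F}$. The same computation with $e = c^{-1}$, together with $F(\id^{\Z^d}) = \id^{\Z^d}$, shows that $c^{-1} \in \SP{F}$, and $\id^{\Z^{d+1}} \in \SP{F}$ trivially. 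Hence $\SP{F}$ is a subgroup of $\G^{\Z^{d+1}}$, and combined with (i) and (ii) this establishes that it is a group shift. The only place one needs to be a little careful is in matching the local-rule constraints across consecutive time slices, which is the reason for taking the forbidden temporal patterns on the shape $(N \times \{0\}) \cup \{(\vec 0, 1)\}$ above.
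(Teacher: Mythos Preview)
Your proof is correct and aligns with the paper's treatment: the paper states the lemma with an immediate \qed, relying on the preceding remark that ``$F$ is a group homomorphism'' for the subgroup property, and your explicit SFT description of $\SP{F}$ is precisely the construction the paper gives in the paragraph following the lemma (leading into Lemma~\ref{lem:SPalgo}). So you have simply written out in full what the paper leaves implicit.
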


Given $\X$ and $F$ we can effectively construct $\SP{F}$. Indeed, we just need to forbid in spatial slices
all the forbidden patterns that define $\X$, and in temporally consecutive pairs of slices patterns where
the local update rule of $F$ is violated. More precisely, let $P$ be the given finite
set of forbidden patterns that defines $\X=\XP_P$, and let $f:\Patt{\X}{N}\longrightarrow \G$
be the given local update rule that defines $F$ with the finite neighborhood $N\subseteq\Z^d$. For any $p\in P$
we forbid the $(d+1)$-dimensional pattern $\hat{p}$ over the domain $D\times\{0\}$ with $\hat{p}(\vec{u},0) = p(\vec{u})$
for all $\vec{u}\in D$, i.e., the spatial slices are forced to belong to $\X$, and for any neighborhood pattern
$q\in \Patt{\X}{N}$ and for any $a\in\G$ such that $a\neq f(q)$ we forbid the pattern $q'_a$ with the domain
$N\times\{0\}\cup\{(\vec{0},1)\}$ where $q'_a(\vec{u},0) = q(\vec{u})$
for all $\vec{u}\in N$ and $q'_a(\vec{0},1)=a$, i.e. consecutive slices are prevented from having an update error
according to the local rule $f$. Let $P'$ be the set of all $\hat{p}$ and $q'_a$. Then clearly $\SP{F}=\XP_{P'}$.

\begin{lemma}
\label{lem:SPalgo}
Given $\X$ and $F$ one can effectively construct $\SP{F}$.
\qed
\end{lemma}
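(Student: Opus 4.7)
The plan is to exhibit a finite set $P'$ of $(d+1)$-dimensional forbidden patterns, computable from the data defining $\X$ and $F$, such that $\SP{F} = \XP_{P'}$. I would design $P'$ to enforce the two defining conditions of a space-time diagram: each spatial slice lies in $\X$, and consecutive temporal slices agree with the local rule of $F$.

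First, starting from the given finite set $P$ with $\X = \XP_P$, for each $p \in P$ of shape $D$ I would include in $P'$ the vertically embedded copy $\hat{p}$ on shape $D \times \{0\} \subseteq \Z^{d+1}$ defined by $\hat{p}(\vec{u},0) = p(\vec{u})$; by translation invariance in $\Z^{d+1}$, this blocks $p$ from appearing in every horizontal slice of any configuration in $\XP_{P'}$. Second, from the given local rule $f : \Patt{\X}{N} \longrightarrow \G$, for every $q \in \Patt{\X}{N}$ and every $a \in \G$ with $a \neq f(q)$, I would include the pattern $q'_a$ on the shape $N \times \{0\} \cup \{(\vec{0},1)\}$ that extends $q$ at height $0$ by the value $a$ at $(\vec{0},1)$. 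This rules out, at every cell and every time step, any local violation of the update rule. Since $P$ and $\Patt{\X}{N}$ are both finite, $P'$ is finite and can be written down explicitly from the input.

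The verification that $\XP_{P'} = \SP{F}$ is straightforward: a configuration $c \in \G^{\Z^{d+1}}$ avoids all patterns of $P'$ exactly when every spatial slice $c^{(i)}$ lies in $\X$ (no $\hat{p}$ appears anywhere) and for every $i \in \Z$ and every $\vec{u} \in \Z^d$ we have $c^{(i+1)}_{\vec{u}} = f(\tau^{\vec{u}}(c^{(i)})|_{N})$ (no $q'_a$ appears), which is exactly the definition of a bi-infinite orbit of $F$. There is no real technical obstacle here; the only point worth emphasizing is that two heterogeneous constraints, one purely spatial and the other linking two consecutive temporal slices, can be packaged together into a single SFT representation in one extra dimension.
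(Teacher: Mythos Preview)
Your proposal is correct and follows essentially the same construction as the paper: forbid the embedded copies $\hat{p}$ of the defining patterns of $\X$ on spatial slices, and forbid the local-rule violations $q'_a$ on pairs of consecutive temporal slices, yielding a finite set $P'$ with $\SP{F}=\XP_{P'}$. The only minor addition in your write-up is the explicit remark that $\Patt{\X}{N}$ is finite and computable, which the paper leaves implicit.
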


\subsection*{Traces}

Let  $D\subseteq \Z^d$ be finite. For any orbit  $\dots ,c^{(-1)},c^{(0)},c^{(1)},\dots$ the sequence
$\dots ,c^{(-1)}|_{D},c^{(0)}|_{D},$ $c^{(1)}|_{D},\dots$ of consecutive views in the domain $D$
is a \emph{$D$-trace}. Each $c^{(i)}|_{D}$ is an element of the finite group $\G^D$, and hence the trace is a
one-dimensional configuration over the group $\G^D$. Let us denote by $\trace{F}{D}\subseteq (\G^D)^\Z$
the set of all $D$-traces of $F$.

\begin{lemma}
$\trace{F}{D}$ is a one-dimensional group shift over $\G^D$. It is the projection
of $\SP{F}$ on $D\times\Z$.
\qed
\end{lemma}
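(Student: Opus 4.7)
The plan is to prove both claims by direct unwinding of definitions, since all the real work has already been done in Lemma~\ref{lem:SPisgroupshift} which identifies $\SP{F}$ as a group shift.

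First I would verify the projection identity $\trace{F}{D} = \pi_{D\times\Z}(\SP{F})$. By construction, a space-time diagram $c \in \SP{F}$ is obtained from a bi-infinite orbit $\dots, c^{(-1)}, c^{(0)}, c^{(1)}, \dots$ via the identification $c_{\vec{u},t} = c^{(t)}_{\vec{u}}$ for $\vec{u} \in \Z^d, t \in \Z$. Restricting $c$ to the set $D \times \Z$ and reinterpreting the result as a one-dimensional configuration over the finite alphabet $\G^D$ in the natural manner, the symbol at temporal position $t$ is the map $\vec{u} \mapsto c_{\vec{u},t}$ on $D$, which is exactly $c^{(t)}|_D$. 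Hence the resulting one-dimensional sequence is $\dots, c^{(-1)}|_D, c^{(0)}|_D, c^{(1)}|_D, \dots$, i.e., the $D$-trace of that orbit. Conversely every $D$-trace arises this way from some element of $\SP{F}$, yielding the claimed equality.

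Second, I would deduce that $\trace{F}{D}$ is a one-dimensional group shift over $\G^D$. Since $\SP{F}$ is a $(d+1)$-dimensional group shift by Lemma~\ref{lem:SPisgroupshift}, it is closed, shift invariant, and a subgroup of $\G^{\Z^{d+1}}$. The projection $\pi_{D\times\Z}$ sends $\G^{\Z^{d+1}}$ to $(\G^D)^{\Z}$, is continuous (preimages of cylinders are cylinders), is a group homomorphism with respect to the cell-wise group structure, and intertwines temporal shifts on the source with shifts on the target. Therefore the image $\pi_{D\times\Z}(\SP{F})$ is closed (as the continuous image of a compact set), is a subgroup of $(\G^D)^{\Z}$, and is translation invariant — precisely the definition of a one-dimensional group shift over $\G^D$.

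There is no genuine obstacle here: the statement is essentially a bookkeeping identification between two viewpoints on the same object, combined with the already established fact that $\SP{F}$ is a group shift. The only point worth being careful about is the canonical identification between patterns over $D$ (elements of $\G^D$) and $d$-dimensional restrictions to $D$, which is the same identification used throughout Section~\ref{sec:groupshifts}.
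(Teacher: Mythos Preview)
Your proposal is correct and matches the paper's approach: the paper gives no explicit proof at all (the lemma is stated with an immediate \qed), treating both claims as direct consequences of the definitions together with Lemma~\ref{lem:SPisgroupshift}. You have simply written out the bookkeeping that the paper leaves implicit.
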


We call the set $\trace{F}{D}$ the $D$-\emph{trace subshift} of $F$, or simply a trace subshift of $F$. It can be effectively constructed:
Given $\X$ and $F$ we can use
Lemma~\ref{lem:SPalgo} to effectively construct the group shift $\SP{F}$ of space-time diagrams, and then by
Corollary~\ref{cor:projection} we can effectively construct the projection
$\trace{F}{D}$  of $\SP{F}$ on $D\times\Z$.

\begin{lemma}
\label{lem:trace}
Given $\X$ and $F$ and any finite $D\subseteq\Z^d$, one can effectively construct $\trace{F}{D}$.
\qed
\end{lemma}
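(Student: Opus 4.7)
The plan is to derive this almost mechanically by chaining together the effective constructions already established earlier in the paper; no genuinely new idea is required, since all the heavy lifting was done in the proof of Lemma~\ref{lem:main} and packaged in Corollary~\ref{cor:projection}. The task decomposes into two steps: first produce an effective representation of the space-time diagram group shift, then effectively project it to the temporal direction over the chosen spatial domain.

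First, I would invoke Lemma~\ref{lem:SPalgo} to construct from the given data $\X=\XP_P$ and local rule $f$ a finite set $P'$ of forbidden $(d+1)$-dimensional patterns such that $\SP{F}=\XP_{P'}$. This gives a usable representation of $\SP{F}$ as a $(d+1)$-dimensional group shift over $\G$ (and it is a group shift by Lemma~\ref{lem:SPisgroupshift}).

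Next, I would recall the identification $\trace{F}{D}=\pi_{D\times\Z}(\SP{F})$ from the preceding unnumbered lemma: a $D$-trace is precisely the restriction of a space-time diagram to the strip $D\times\Z$, where the last coordinate is interpreted as time. This realizes $\trace{F}{D}$ as a projection of a $(d+1)$-dimensional group shift onto a $1$-dimensional sub-grid $D\times\Z$ with $D\subseteq\Z^d$ finite, which is exactly the setting of Corollary~\ref{cor:projection} applied in dimension $d+1$ with $k=1$.

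Applying Corollary~\ref{cor:projection} to $\SP{F}$ and $D$ then effectively produces a finite set of forbidden patterns over the alphabet $\G^D$ defining the one-dimensional group shift $\trace{F}{D}$, as required. The only obstacle worth flagging is conceptual rather than technical, namely keeping the bookkeeping of dimensions and alphabets consistent (the temporal direction is the last coordinate of $\Z^{d+1}$, and the projection turns the $\G$-valued space-time diagram into a $\G^D$-valued one-dimensional configuration); once this is straight, the proof is a one-line composition of the two previously established algorithms.
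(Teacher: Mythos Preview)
Your proposal is correct and follows exactly the same route as the paper: first invoke Lemma~\ref{lem:SPalgo} to effectively construct $\SP{F}$, then apply Corollary~\ref{cor:projection} to project onto $D\times\Z$ and obtain $\trace{F}{D}$. The paper's proof is in fact given in the paragraph immediately preceding the lemma (which is why the lemma carries a bare \qed), and your write-up simply spells out those two lines with a bit more care about the bookkeeping.
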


\subsection*{Limit sets}

The \emph{limit set} $\limit{F}$ of a cellular automaton $F$ consists of all configurations $c^{(0)}\in\X$ that
are present in some bi-infinite orbit $\dots c^{(-1)},c^{(0)},c^{(1)},\dots$
In other words, $\limit{F}$ is the set of the $d$-dimensional slices
of thickness one of $\SP{F}$ in the $d$ spatial dimensions. As a projection of the group shift $\SP{F}$,
the set $\limit{F}$ is a group shift.

\begin{lemma}
$\limit{F}$ is a $d$-dimensional group shift over $\G$. It is the projection
of $\SP{F}$ on $\Z^d\times\{0\}$.
\qed
\end{lemma}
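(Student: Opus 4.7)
The plan is to argue both assertions directly from the definitions, with the only real content being the identification of $\limit{F}$ with a projection of $\SP{F}$ and an appeal to Lemma~\ref{lem:SPisgroupshift}.

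First I would unfold the definition of $\limit{F}$. By definition, $c \in \limit{F}$ iff there is a bi-infinite orbit $\dots,c^{(-1)},c^{(0)},c^{(1)},\dots$ with $c^{(0)} = c$. Via the correspondence between bi-infinite orbits and space-time diagrams established just before the definition of $\SP{F}$, such an orbit exists iff there is some $x \in \SP{F} \subseteq \G^{\Z^{d+1}}$ with $x_{\vec{u},0} = c_{\vec{u}}$ for every $\vec{u} \in \Z^d$, i.e.\ iff $c = x|_{\Z^d \times \{0\}}$ for some $x \in \SP{F}$. This is exactly the statement that $c \in \pi_{\Z^d \times \{0\}}(\SP{F})$, so $\limit{F} = \pi_{\Z^d \times \{0\}}(\SP{F})$.

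Next I would show this projection is a group shift. By Lemma~\ref{lem:SPisgroupshift}, $\SP{F}$ is a $(d+1)$-dimensional group shift over $\G$. The projection onto a single slice $\Z^d \times \{0\}$ is continuous, commutes with the $d$ spatial translations of $\Z^{d+1}$, and is cell-wise a group homomorphism $\G^{\Z^{d+1}} \to \G^{\Z^d}$ after the natural identification $\G^{\{0\}} \cong \G$. Its image is therefore closed (continuous image of a compact set), invariant under the $d$-dimensional shift action on $\G^{\Z^d}$, and a subgroup of $\G^{\Z^d}$. Thus $\limit{F}$ is a $d$-dimensional group shift over $\G$.

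There is really no obstacle in this proof; the whole content has been arranged earlier in the section. The only thing to be careful about is the notational identification of the slice $\Z^d \times \{0\}$ with the form $D \times \Z^k$ used in Corollary~\ref{cor:projection} (here with the roles of the spatial and temporal coordinates permuted so that $D = \{0\}$ and $k = d$), which also makes the projection effectively constructible — this fact is not claimed in the stated lemma itself but will be useful later.
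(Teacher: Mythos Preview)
Your proposal is correct and matches the paper's intent: the paper gives no proof at all (just \qed), treating both assertions as immediate from the definition of $\limit{F}$, the correspondence between orbits and space-time diagrams, and Lemma~\ref{lem:SPisgroupshift}. You have simply spelled out those details faithfully.
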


Using Corollary~\ref{cor:projection} we immediately get an algorithm to construct the limit set.

\begin{lemma}
\label{lem:limitset}
Given $\X$ and $F$, one can effectively construct $\limit{F}$.
\qed
\end{lemma}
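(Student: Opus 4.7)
The plan is to compose the two building blocks already developed in this section. First I would invoke Lemma~\ref{lem:SPalgo} on the given pair $(\X,F)$ to produce, effectively, a finite set of forbidden patterns defining the $(d+1)$-dimensional group shift $\SP{F}\subseteq \G^{\Z^{d+1}}$ of space-time diagrams. That step is already fully constructive: one just forbids the spatial-slice violations coming from the presentation $\X=\XP_P$ together with the temporal update violations coming from the local rule $f$ of $F$.

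Second, I would appeal to Corollary~\ref{cor:projection} applied to $\SP{F}$ to extract a single temporal slice. Orienting coordinates as in Lemma~\ref{lem:SPisgroupshift}, with the temporal axis being the last coordinate of $\Z^{d+1}$, this means taking ambient dimension $d+1$, choosing $k=d$, and setting $D=\{0\}\subseteq\Z^{(d+1)-k}=\Z$, after a trivial permutation of coordinates so that $D$ indexes the temporal direction and $\Z^k$ indexes the spatial directions (such permutations are already used inside the proof of Corollary~\ref{cor:projection}). The resulting group shift $\pi_{\{0\}\times\Z^d}(\SP{F})\subseteq\G^{\Z^d}$ is precisely $\limit{F}$, as identified in the lemma immediately preceding the statement.

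Since each of the two steps is effective, their composition produces a finite set of forbidden patterns defining $\limit{F}$ as an SFT. There is no real obstacle; all the substance has been absorbed into Lemma~\ref{lem:main} (which powers the projection operation) and Lemma~\ref{lem:SPalgo} (which constructs $\SP{F}$). The only care required is the bookkeeping that makes the temporal direction the one being collapsed while the $d$ spatial directions are retained, and this is handled by the coordinate permutations built into Corollary~\ref{cor:projection}.
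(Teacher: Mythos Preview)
Your proposal is correct and follows exactly the paper's approach: the paper states just before the lemma that ``Using Corollary~\ref{cor:projection} we immediately get an algorithm to construct the limit set,'' and gives no further proof beyond the \qed. Your two-step decomposition (construct $\SP{F}$ via Lemma~\ref{lem:SPalgo}, then project via Corollary~\ref{cor:projection}), including the remark about permuting coordinates so that the temporal axis is the one collapsed, is precisely what the paper intends.
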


By definition it is clear that $F(\limit{F})=\limit{F}$ so that $F$ is surjective on its limit set.
By a simple compactness argument we have that $\limit{F}=\bigcap_{n\in\N} F^{n}(\X)$, stating that any configuration that has
arbitrarily long sequences of pre-images has an infinite sequence of pre-images. Note that $\X\supseteq F(\X)\supseteq F^{2}(\X)\supseteq$ is a decreasing chain of group shifts. By Theorem~\ref{thm:nodecreasingchain}
there are no infinite strictly decreasing chains of group shifts, so we have that $F^{k+1}(\X)=F^{k}(\X)$
holds for some $k$. Then $F^{j}(\X)=F^{k}(\X)$ for all $j>k$ so that $\limit{F}=F^{k}(\X)$. So all group cellular automata
reach their limit set after a finite time:
\begin{lemma}
\label{lem:stable}
Group cellular automata $F:\X\longrightarrow\X$ are stable in the sense that there exists
$k\in\N$ such that $F^k(\X)=\limit{F}$.
\qed

\end{lemma}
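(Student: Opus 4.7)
My plan is to exploit the descending chain property for group shifts (Theorem~\ref{thm:nodecreasingchain}) applied to the sequence of forward images $\X\supseteq F(\X)\supseteq F^2(\X)\supseteq\cdots$, together with a compactness argument identifying $\limit{F}$ with the intersection of this chain.

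First I would verify that each $F^n(\X)$ is a group shift. Since $F:\X\longrightarrow\X$ is a group shift homomorphism and the image of a group shift under a group shift homomorphism is again a group shift (this is exactly the setting of Corollary~\ref{cor:homorphicimage}), an easy induction gives that $F^n(\X)\subseteq\G^{\Z^d}$ is a group shift for every $n\in\N$. The inclusions $F^{n+1}(\X)\subseteq F^n(\X)$ are immediate from $F(\X)\subseteq\X$.

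Second, I would apply Theorem~\ref{thm:nodecreasingchain} to the descending chain $\X\supseteq F(\X)\supseteq F^2(\X)\supseteq\cdots$ of group shifts to conclude that it cannot strictly decrease forever. Hence there exists $k\in\N$ with $F^{k+1}(\X)=F^k(\X)$, and by a trivial induction $F^j(\X)=F^k(\X)$ for all $j\geq k$.

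Finally I would identify $F^k(\X)$ with $\limit{F}$. One inclusion is clear: any $c\in\limit{F}$ lies in some bi-infinite orbit, hence has pre-images of every order under $F$, so $c\in F^n(\X)$ for every $n$, and in particular $c\in F^k(\X)$. For the converse, given $c\in F^k(\X)=\bigcap_{n\in\N} F^n(\X)$, for every $n$ there exists $e_n\in\X$ with $F^n(e_n)=c$, so the sequence $e_n,F(e_n),\dots ,F^n(e_n)=c$ provides a backward trajectory of length $n$ ending at $c$; by compactness of $\G^{\Z^d}$ in the prodiscrete topology and a standard diagonal extraction, one obtains a bi-infinite orbit through $c$, so $c\in\limit{F}$. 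The only mildly delicate step is this compactness/diagonal argument, but it is entirely routine and no step here is a real obstacle — the whole lemma is essentially an immediate corollary of Theorem~\ref{thm:nodecreasingchain}.
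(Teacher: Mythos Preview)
Your proposal is correct and follows essentially the same approach as the paper: the paper likewise observes that $\X\supseteq F(\X)\supseteq F^{2}(\X)\supseteq\cdots$ is a decreasing chain of group shifts, invokes Theorem~\ref{thm:nodecreasingchain} to get $F^{k+1}(\X)=F^{k}(\X)$ for some $k$, and uses the compactness identification $\limit{F}=\bigcap_{n\in\N}F^{n}(\X)$ to conclude. Your write-up is slightly more detailed (you explicitly justify that each $F^n(\X)$ is a group shift and spell out the diagonal extraction), but the argument is the same.
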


\subsection*{Periodic points}

A well-known open problem due to Blanchard and Tisseur asks whether every surjective cellular automaton
on a (one-dimensional) full shift has a dense set of temporally periodic points. This has been proved to be the
case in a number of restricted setups, including additive cellular automata on the one-dimensional full shift~\cite{DennunzioTCS2020}.
In fact, Theorem~\ref{thm:periodic} implies the result for all group cellular automata, for any dimension and
on any group shift, not just the full shift. Even jointly periodic configurations are dense:
a configuration is called \emph{jointly periodic} for a cellular automaton if it is temporally periodic and also totally periodic in space.

\begin{corollary}
\label{cor:densetemporallyperiodic}
Let $F:\X\longrightarrow \X$ be a group cellular automaton on a $d$-dimensional group shift $\X$. Jointly periodic configurations are dense in $\limit{F}$. In particular, if $F$ is surjective then they are dense in $\X$.
\end{corollary}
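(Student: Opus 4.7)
The plan is to transfer the density statement from Theorem~\ref{thm:periodic}, which says totally periodic configurations are dense in any group shift, applied to the $(d+1)$-dimensional group shift $\SP{F}$ of space-time diagrams (group shift by Lemma~\ref{lem:SPisgroupshift}). Concretely, let $c\in\limit{F}$ and let $D\subseteq\Z^d$ be any finite domain; I must produce a jointly periodic $c'\in\limit{F}$ with $c'|_D = c|_D$, which suffices for density since cylinders over $D$ form a base of the topology.

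Because $c\in\limit{F}$, there is a bi-infinite orbit $\dots,c^{(-1)},c^{(0)},c^{(1)},\dots$ with $c^{(0)}=c$, and the associated space-time diagram $s\in\SP{F}\subseteq\G^{\Z^{d+1}}$ satisfies $s|_{\Z^d\times\{0\}}=c$. In particular the pattern $p=c|_D$, viewed as a pattern over the domain $D\times\{0\}\subseteq\Z^{d+1}$, belongs to $\Lang{\SP{F}}$. By Theorem~\ref{thm:periodic} applied to $\SP{F}$, there exists a totally periodic $s'\in\SP{F}$ containing $p$; after a suitable translation (which preserves $\SP{F}$) we may assume $s'|_{D\times\{0\}}=p$. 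By the observation after the definition of totally periodic configurations, there is a common integer $k>0$ such that $s'$ has vectors of periodicity $k\vec{e}_1,\dots,k\vec{e}_{d+1}$, covering both the $d$ spatial and the single temporal coordinate direction.

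Now set $c'=s'|_{\Z^d\times\{0\}}$. Then $c'\in\limit{F}$ since it is the time-$0$ slice of a bi-infinite orbit, $c'$ agrees with $c$ on $D$, it is totally periodic in space with common period $k$ (from $k\vec{e}_1,\dots,k\vec{e}_d$), and its $F$-orbit has period dividing $k$ (from $k\vec{e}_{d+1}$). Hence $c'$ is jointly periodic, proving density in $\limit{F}$. For the last sentence, if $F$ is surjective then $F(\X)=\X$, so $F^n(\X)=\X$ for every $n$ and Lemma~\ref{lem:stable} (or the chain argument preceding it) yields $\limit{F}=\X$, transporting the density to all of $\X$.

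I do not expect any real obstacle here: the only point worth checking carefully is that total periodicity of a $(d+1)$-dimensional configuration simultaneously delivers the spatial and temporal periodicities needed for the joint periodicity of the slice, which is exactly what the ``common $k$'' remark after the definition of totally periodic configurations provides.
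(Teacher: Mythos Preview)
Your proof is correct and follows essentially the same approach as the paper: apply Theorem~\ref{thm:periodic} to the $(d+1)$-dimensional group shift $\SP{F}$ (Lemma~\ref{lem:SPisgroupshift}), then project the resulting totally periodic space-time diagrams to $\Z^d\times\{0\}$ to obtain jointly periodic configurations dense in $\limit{F}$, and use $\limit{F}=\X$ in the surjective case. You spell out the cylinder/translation details more explicitly, but the argument is the same.
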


\begin{proof}
By Lemma~\ref{lem:SPisgroupshift} the set $\SP{F}$ of space-time diagrams is a $(d+1)$-dimensional
group shift, and by Theorem~\ref{thm:periodic} totally periodic elements are dense in $\SP{F}$. The projection $\pi(c)$
of a totally periodic space-time diagram $c$ on the domain $\Z^d\times\{0\}$
is a totally periodic element of $\limit{F}$ that is also temporally periodic. The density of totally periodic space-time diagrams $c$ in $\SP{F}$
implies the density of their projections $\pi(c)$ in $\limit{F}=\pi(\SP{F})$. If $F$ is surjective then $\limit{F}=\X$.
\end{proof}

\subsection*{Injectivity and surjectivity}

Another immediate implication of Theorem~\ref{thm:periodic} is a \emph{surjunctivity} property:
every injective group cellular automaton $F:\X\longrightarrow\X$ is surjective.

\begin{corollary}
\label{cor:injectiveimpliessurjective}
Let $F:\X\longrightarrow \X$ be a group cellular automaton on a $d$-dimensional group shift $\X$.
If $F$ is injective then it is surjective.
\end{corollary}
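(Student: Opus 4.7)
The plan is to exploit Theorem~\ref{thm:periodic} together with the fact that a group cellular automaton preserves total periodicity. The central observation is that, for each $n\geq 1$, the set
$$
\X_n \;=\; \{\, c\in\X \;:\; c \text{ is } n\vec{e}_i\text{-periodic for every } i=1,\dots ,d\,\}
$$
is \emph{finite}, since any such configuration is determined by its restriction to the $d$-cube $\{0,\dots ,n-1\}^d$.

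First I would check that $F(\X_n)\subseteq\X_n$. This is immediate: translation invariance of $F$ implies that $\vec{t}$-periodicity is preserved by $F$, and $F(c)\in\X$ by hypothesis. So $F$ restricts to a self-map on the finite set $\X_n$, and injectivity of $F$ on $\X$ forces $F|_{\X_n}$ to be a bijection. In particular $\X_n = F(\X_n)\subseteq F(\X)$.

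Next, Theorem~\ref{thm:periodic} guarantees that $\bigcup_{n\geq 1}\X_n$ is dense in $\X$: every finite pattern in $\Lang{\X}$ appears in some totally periodic element of $\X$, and by passing to a common multiple of its periods we can take this element to lie in some $\X_n$. Combining density with $\bigcup_n \X_n\subseteq F(\X)$, and noting that $F(\X)$ is a group shift, hence topologically closed, by Corollary~\ref{cor:homorphicimage}, we conclude $\X\subseteq F(\X)$, i.e.\ $F$ is surjective.

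I do not anticipate a real obstacle here; the only point to handle carefully is the reduction from ``totally periodic with some basis of periods'' to ``periodic with the common vectors $n\vec{e}_1,\dots ,n\vec{e}_d$'', but this is automatic from the definition of totally periodic configurations recalled in Section~\ref{sec:prelminaries}.
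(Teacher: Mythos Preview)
Your proof is correct and follows essentially the same line as the paper's: restrict $F$ to the finite set of configurations with a fixed periodicity lattice, use injectivity-on-finite-sets to get surjectivity there, invoke Theorem~\ref{thm:periodic} for density, and close up. The only cosmetic difference is that the paper justifies closedness of $F(\X)$ directly from continuity of $F$ and compactness of $\X$, whereas you appeal to Corollary~\ref{cor:homorphicimage}; the direct compactness argument is lighter and suffices here.
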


\begin{proof}
If $F$ is injective then it is injective among totally periodic configurations of $\X$. For any fixed $k>0$
there are finitely many configurations in $\X$ that are $k\vec{e}_i$-periodic for all $i\in\{1,\dots ,d\}$.
These are mapped by $F$ injectively to each other. Any injective map on a finite set is also surjective, so
we see that $F$ is surjective among totally periodic configurations of $\X$. By Theorem~\ref{thm:periodic}
the totally periodic configurations are dense in $\X$ so that $F(\X)$ is a dense subset of $\X$.
By the continuity of $F$ it is also closed which means that $F(\X)=\X$.
\end{proof}

We have that every injective group cellular automaton is bijective.
Recall that a bijective cellular automaton $F$ is automatically reversible, meaning that the inverse $F^{-1}$
is also a cellular automaton. If $F$ is a reversible group cellular automaton then clearly so is $F^{-1}$.
Reversible cellular automata are of particular interest due to their relevance in modeling microscopic physics
and in other application domains~\cite{revsurvey}. While it is decidable if a given one-dimensional
cellular automaton is injective (=reversible) or surjective, the same questions are undecidable for general
two-dimensional cellular automata~\cite{karireversible}. As expected, the situation is different for group cellular automata.

\begin{theorem}
\label{thm:injectivity}
It is decidable if a given group cellular automaton $F:\X\longrightarrow \X$ over a given $d$-dimensional group shift $\X$
is injective (surjective).
\end{theorem}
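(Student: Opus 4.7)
The plan is to reduce both decision problems to equality tests between effectively constructible group shifts, exploiting the fact that $F$ is a group homomorphism.

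For surjectivity, I would use Corollary~\ref{cor:homorphicimage} to effectively construct the image $F(\X) \subseteq \G^{\Z^d}$ as a group shift (in terms of a finite set of forbidden patterns). Then $F$ is surjective if and only if $F(\X) = \X$, and this equality between two given group shifts is decidable by Corollary~\ref{cor:groupequivalnece}(b). Equivalently, one could compute $\limit{F}$ via Lemma~\ref{lem:limitset} and test $\limit{F} = \X$, since by Lemma~\ref{lem:stable} and the fact that $\limit{F} \subseteq F(\X) \subseteq \X$, surjectivity is equivalent to $\limit{F} = \X$.

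For injectivity, the key observation is that because $F$ is a group homomorphism, it is injective on $\X$ if and only if its kernel is trivial, that is, $\kernel{F} = \{\id_\G^{\Z^d}\}$. By Lemma~\ref{lem:kernel} the group shift $\kernel{F} = F^{-1}(\id_\G^{\Z^d})$ can be effectively constructed from the given local rule of $F$ and the given defining forbidden patterns of $\X$. The trivial group shift $\{\id_\G^{\Z^d}\}$ is itself an effectively given SFT (forbid every single-cell pattern whose symbol is not $\id_\G$). Hence $\kernel{F} = \{\id_\G^{\Z^d}\}$ can be tested by Corollary~\ref{cor:groupequivalnece}(b), yielding the decision procedure for injectivity.

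No step poses a real obstacle: the heavy lifting has already been done in Section~\ref{sec:groupshifts}, where we established that arbitrary projections of group shifts (hence homomorphic images, Corollary~\ref{cor:homorphicimage}) and preimages of group shifts (Lemma~\ref{lem:kernel}) can be effectively computed, and that equality of group shifts is decidable (Corollary~\ref{cor:groupequivalnece}). The only mild subtlety is the choice of reduction: one may find it more natural to decide injectivity by first computing $F(\X)$ and appealing to Corollary~\ref{cor:injectiveimpliessurjective} to know that an injective $F$ must be bijective, but the direct kernel-triviality test is cleaner and avoids invoking surjunctivity.
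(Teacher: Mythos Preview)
Your proposal is correct and matches the paper's own proof essentially line for line: the paper decides surjectivity by constructing $\limit{F}$ (your second alternative) and testing $\limit{F}=\X$ via Corollary~\ref{cor:groupequivalnece}(b), and decides injectivity by constructing $\kernel{F}$ via Lemma~\ref{lem:kernel} and testing $\kernel{F}=\{\id_{\X}\}$. Your primary surjectivity route through $F(\X)$ and Corollary~\ref{cor:homorphicimage} is an equally valid (and arguably more direct) variant.
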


\begin{proof}
By Lemma~\ref{lem:limitset} one can effectively construct the limit set $\limit{F}$. The CA $F$ is surjective if and only if
$\limit{F}=\X$. As equality of given group shifts is decidable (Corollary~\ref{cor:groupequivalnece}(b)), it follows that surjectivity is decidable.

For injectivity, recall that a group homomorphism $F$ is injective if and only if $\kernel{F}=\{\id_{\X}\}$.
Since $\kernel{F}$ is a group shift that can be effectively constructed (Lemma~\ref{lem:kernel}), we can check injectivity by checking
the equality of the two group shifts $\kernel{F}$ and $\{\id_{\X}\}$.
\end{proof}

\subsection*{The Garden-of-Eden-theorem}

The Garden-of-Eden-theorem is among the oldest results in the theory of cellular automata. It links injectivity and surjectivity. Let us call two configurations $c,e\in A^{\Z^d}$
\emph{asymptotic}  if their \emph{difference set} $\diff{c}{e}=\{\vec{u}\in\Z^d\ |\ c_{\vec{u}}\neq e_{\vec{u}}\}$ is finite. A cellular automaton $F:X\longrightarrow X$ on a subshift $X$
is called \emph{pre-injective} if for any asymptotic $c,e$ the following holds: $c\neq e\Longrightarrow F(c)\neq F(e)$. So  injectivity is only required among mutually asymptotic configurations.
Trivially every injective cellular automaton is pre-injective but the converse implication is not true. In fact, the classical
Garden-of-Eden-theorem states that on full shifts in any dimension pre-injectivity is equivalent to surjectivity.

\begin{theorem}[the Garden-of-Eden-theorem~\cite{moore,myhill}]
A cellular automaton $F:A^{\Z^d}\longrightarrow A^{\Z^d}$ is pre-injective if and only if it is surjective.
\end{theorem}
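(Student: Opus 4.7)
The plan is to prove both implications as contrapositives via counting arguments in large cubes $B_n = [-n,n]^d$, exploiting the asymptotic gap between the volume $|B_n| = \Theta(n^d)$ and a boundary layer of constant thickness, where $N$ denotes the neighborhood of $F$.

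For the Moore direction (not pre-injective implies not surjective), the hypothesis provides asymptotic $c \neq e$ with $F(c) = F(e)$, differing on some finite shape $D$. A standard extraction yields a pair of mutually erasable patterns $p \neq q$ on a finite shape $D'$: for every configuration $\rho$ with $\rho|_{D'} = p$, overwriting $\rho|_{D'}$ by $q$ leaves $F(\rho)$ unchanged. Pack $k = \Theta(n^d)$ pairwise disjoint translates of $D'$ inside $B_n$, and fix an arbitrary exterior outside $B_n$. Each of the $|A|^{|B_n|}$ resulting interior patterns lies in an equivalence class of size at least $2^k$ under the erasability swap, with all members of a class sharing the same $F$-image on the $N$-interior $B_n^-$. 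Hence at most $|A|^{|B_n|}/2^k$ distinct $B_n^-$-images occur, whereas $|A|^{|B_n^-|}$ patterns are available. Since $k \log_{|A|} 2 = \Theta(n^d)$ dominates $|B_n| - |B_n^-| = O(n^{d-1})$, for large $n$ some $B_n^-$-pattern never appears in the image of $F$, producing a Garden-of-Eden pattern.

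For the Myhill direction (not surjective implies not pre-injective), a Garden-of-Eden pattern $r$ on shape $D_0$ is given. Pack $k = \Theta(n^d)$ pairwise disjoint translates of $D_0$ inside $B_n^+ = B_n + N$. Every image $F(\rho)|_{B_n^+}$ must avoid $r$ on each such translate, so the number of achievable images is bounded by
\[
|A|^{|B_n^+|}\bigl(1 - |A|^{-|D_0|}\bigr)^{k}.
\]
On the source side, fixing any configuration outside $B_n$ leaves $|A|^{|B_n|}$ choices for $\rho$; since $\rho$ is determined outside $B_n$, the image $F(\rho)$ is determined outside $B_n^+$, so the whole image is captured by $F(\rho)|_{B_n^+}$. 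Because $(1 - |A|^{-|D_0|})^k$ decays as $|A|^{-\Theta(n^d)}$ while $|B_n^+| - |B_n| = O(n^{d-1})$, the source count exceeds the image count for large $n$, and pigeonhole produces distinct $\rho \neq \rho'$ agreeing outside $B_n$ with $F(\rho) = F(\rho')$, witnessing pre-injectivity failure.

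The principal obstacle is the Moore direction: extracting \emph{mutually} erasable patterns from the merely pointwise equality $F(c) = F(e)$ requires more than the hypothesis alone, since a priori the swap between $c$ and $e$ only works with their specific surroundings. The classical resolution compounds finitely many such pointwise swaps using a separate pigeonhole step on finite patterns, producing a pair of patterns that can be exchanged uniformly in any context. Once this pair is in hand, both Moore and Myhill reduce to the elementary asymptotic fact that any $\Theta(n^d)$ exponential gain overwhelms an $O(n^{d-1})$ boundary correction.
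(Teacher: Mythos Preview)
The paper does not give its own proof of this classical theorem; it is quoted with attribution to Moore and Myhill and then used as background. So I assess your sketch on its own.

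Your Myhill direction is the standard counting argument and is essentially correct (with the small caveat that the region on which $F(\rho)$ is determined by the exterior of $B_n$ is the complement of $B_n-N$ rather than of $B_n+N$; these coincide when $N=-N$).

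Your Moore direction has a genuine gap in the counting. You assert that each of the $|A|^{|B_n|}$ interior patterns lies in an erasability-swap class of size at least $2^k$. That is false: the swap $p\leftrightarrow q$ in a given translate of $D'$ is only available when the pattern already sitting there equals $p$ or $q$; a generic interior pattern has neither in any of the $k$ translates, and its class is a singleton. The correct bound on the number of classes is $(|A|^{|D'|}-1)^k\cdot |A|^{|B_n|-k|D'|}=|A|^{|B_n|}(1-|A|^{-|D'|})^k$, obtained by identifying $p$ with $q$ in each translate. This still beats $|A|^{|B_n^-|}$ by a factor $|A|^{-\Theta(n^d)}$ versus the $O(n^{d-1})$ boundary loss, so your overall strategy survives once the inequality is repaired.

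Finally, your closing paragraph misplaces the difficulty. Extracting mutually erasable patterns from a single asymptotic pair $(c,e)$ with $F(c)=F(e)$ is immediate and needs no auxiliary pigeonhole step: if $D\supseteq\diff{c}{e}$ and $D'\supseteq D+N-N$, then for any $\rho$ with $\rho|_{D'}=c|_{D'}$, every cell whose image could change under the swap has its entire neighborhood inside $D'$, hence the local rule sees exactly what it sees in $c$ (respectively in $e$ after the swap), and $F(\rho)$ is unchanged. The swap is uniform from the outset. The actual obstacle was the counting, which is precisely where your argument slipped.
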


That surjectivity implies pre-injectivity was first proved by E.F.Moore~\cite{moore}, and the converse implication a year later by J.Myhill~\cite{myhill}. Later the theorem has been
extended to many other settings. For example, it is known that the Garden-of-Eden-theorem holds for
cellular automata over so-called \emph{strongly irreducible} subshifts of finite type~\cite{fiorenzi}.

Note that the Myhill direction implies surjunctivity: if a cellular automaton is injective then it is pre-injective and by Myhill's theorem surjective.
For group shifts we proved surjunctivity differently in Corollary~\ref{cor:injectiveimpliessurjective}, using the density of periodic points.
There is a good reason for this: the Myhill direction of the Garden-of-Eden-theorem is namely
not true for all group cellular automata over group shifts, as shown by the following trivial example.

\begin{example}
Let $\X=\{0^{\Z}, 1^{\Z}\}$ be the two-element group shift over the two-element cyclic group $\Z_2$, and let $F:\X\longrightarrow \X$  be the group
cellular automaton $F(0^{\Z})=F(1^{\Z})=0^{\Z}$. Then $F$ is pre-injective but not surjective.
\end{example}

Recall that it is decidable whether a given group cellular automaton is surjective
(Theorem~\ref{thm:injectivity}). Since surjectivity and pre-injectivity are not equivalent for all group cellular automata, a natural
follow up question is to determine if a given group cellular automaton is pre-injective. The decidability status of this question remains open.

\begin{question}
Is it decidable if a given group cellular automaton is pre-injective ?
\end{question}

Next we show that the Moore direction of the Garden-of-Eden-theorem holds for all group cellular automata.
The proof is based on the fact that all
surjective cellular automata preserve entropy
while group cellular automata that are not pre-injective do not
preserve it. The topological \emph{entropy}  of a $d$-dimensional subshift $X$ is defined as $$h(X)=\lim_{n\to\infty}\frac{\log |\Patt{X}{B_n}| }{|B_n|}$$
where $B_n=\{1,\dots ,n\}^d$  is the $d$-dimensional box  of size $n\times \dots \times n$.  The limit exists by Fekete's Subadditive Lemma.

\begin{theorem}
\label{thm:group_goe}
Let $F:\X\longrightarrow \X$ be a group cellular automaton over a group shift $\X\subseteq\G^{\Z^d}$. If $F$ is surjective then $F$ is pre-injective.
\end{theorem}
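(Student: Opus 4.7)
My plan is to prove the contrapositive: if $F$ is not pre-injective, then $F$ strictly decreases topological entropy, hence cannot be surjective (since surjectivity trivially gives $F(\X)=\X$ and therefore $h(F(\X))=h(\X)$). The strategy is the classical Moore-style counting argument, with the group structure used to produce many independent "collisions" from a single one.

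The first step is to extract a non-trivial finite-support configuration in the kernel of $F$. Assume $F$ is not pre-injective, so there exist asymptotic $c\neq e$ in $\X$ with $F(c)=F(e)$. Setting $h=ce^{-1}$, the fact that $F$ is a group homomorphism and $\X$ is a group shift gives $h\in\X$, $h\neq\id$, $F(h)=\id^{\Z^d}$, and $\diff{h}{\id^{\Z^d}}$ is finite. So the kernel $\kernel{F}$ contains a non-trivial element $h$ whose support $S$ lies in some box $B_n$ of side $n$, and $h$ disagrees with $\id$ in at least one cell.

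The second step is a packing argument. Let $r$ be a radius for the neighborhood of $F$. For large $N$ one can fit $k\geq c_d N^d/n^d$ disjoint translates $\tau^{\vec{t}_i}(h)$ of $h$ entirely inside $B_N$, with all their "image footprints" $\vec{t}_i+S+N$ (in the $F$-neighborhood sense) contained in $B_{N-r}$. Each translate lies in $\X$ and in $\kernel{F}$, and products of any subfamily are distinct (the supports are pairwise disjoint, so any two distinct subsets disagree in at least one cell), giving $2^k$ distinct elements of the kernel of the finite group homomorphism
\[
\Phi_N\colon \Patt{\X}{B_N}\longrightarrow \Patt{F(\X)}{B_{N-r}}
\]
induced by the local rule. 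Since $\Patt{\X}{B_N}$ is a finite group, the first isomorphism theorem gives $|\Patt{F(\X)}{B_{N-r}}|=|\Phi_N(\Patt{\X}{B_N})|\leq |\Patt{\X}{B_N}|/2^k$. Taking logarithms, dividing by $|B_N|=N^d$, and sending $N\to\infty$ yields
\[
h(F(\X))\leq h(\X)-\frac{|S|}{n^d}<h(\X),
\]
where I use that $\Patt{F(\X)}{B_{N-r}}=\Phi_N(\Patt{\X}{B_N})$ because the local rule realizes every $F(\X)$-pattern on $B_{N-r}$ from some $\X$-pattern on $B_N$, and that $(N-r)^d/N^d\to 1$.

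If $F$ were surjective then $F(\X)=\X$, forcing $h(F(\X))=h(\X)$, contradicting the strict inequality. The main obstacle is purely bookkeeping: verifying that a linear-in-$N^d$ number of disjoint translates really fit inside $B_N$ after accounting for the radius $r$ and the shape of $S$, and that the resulting $2^k$ products really are distinct patterns in $\Patt{\X}{B_N}$. Both are standard combinatorial facts: a cube of side $N$ can be tiled with $\lfloor N/n\rfloor^d$ disjoint axis-aligned copies of $B_n$, and distinctness is immediate from the disjoint-support condition together with $h\neq\id$.
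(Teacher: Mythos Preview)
Your argument is correct and follows essentially the same approach as the paper: both prove the contrapositive by extracting a nontrivial finite-support element $h=ce^{-1}$ of $\kernel{F}$ and deducing a strict entropy drop $h(F(\X))<h(\X)$. The paper packages the entropy step more concisely via the addition formula $h(\X)=h(F(\X))+h(\kernel{F})$ (a consequence of the first isomorphism theorem applied to the finite groups $\Patt{\X}{B_N}$) together with the observation that a nontrivial finite-support kernel element forces $h(\kernel{F})>0$; you instead unfold this into the explicit Moore-style packing count, which is a perfectly good and more self-contained route to the same inequality. One cosmetic slip: the constant in your displayed bound should be $\log 2/n^d$ (from the $2^k$ kernel elements with $k\sim (N/n)^d$), not $|S|/n^d$; this does not affect the conclusion.
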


\begin{proof}
Suppose $F$ is not pre-injective so $F(x)=F(y)$ for an asymptotic pair $x,y\in\X$, $x\neq y$. Then $c=xy^{-1}\in\X$ is
asymptotic with $\id_{\X}$ while $c\neq \id_{\X}$ and $F(c)=F(\id_{\X})=\id_{\X}$.
It follows from this fact that the entropy of the kernel of $F$ is strictly positive, $h(\kernel{F})>0$. However, one can easily prove using the first isomorphism theorem of groups that
for the entropies of the group shifts $\X$, $F(\X)$ and $\kernel{F}$ the following addition formula holds:
$h(\X)=h(F(\X))+h(\kernel{F})$. We then have that $h(\X)>h(F(\X))$, implying that $\X\neq F(\X)$, \emph{i.e.}, that $F$ is not surjective.
\end{proof}

\subsection*{Nilpotency, equicontinuity and sensitivity}

A cellular automaton is called \emph{nilpotent} if there is only one configuration in the limit set $\limit{F}$.
(Clearly the limit set is never empty.) Nilpotency is undecidable even for cellular automata over one-dimensional full shifts~\cite{kari1992nilpotency,Lewis79}. In the case of group cellular automata the identity configuration is a fixed point and hence automatically in the limit set. Nilpotency
of group cellular automata can be easily tested by effectively constructing the limit set (Lemma~\ref{lem:limitset})
and testing equivalence with the singleton group shift $\{\id_{\X}\}$.

More generally, a cellular automaton $F$ is \emph{eventually periodic} if $F^{n+p}=F^n$ for some $n$ and  $p\geq 1$, and
it is \emph{periodic} if $F^p$ is the identity map for some $p\geq 1$.
Nilpotent cellular automata are clearly eventually periodic with $p=1$. Note that eventually periodic
cellular automata are periodic on the limit set and, conversely, if $F$ is periodic on its limit set then
it is eventually periodic on $\X$ because $\limit{F}=F^n(\X)$ for some $n$ by Lemma~\ref{lem:stable}.

\begin{theorem}
\label{thm:decideperiodicity}
 It is decidable for a given group cellular automaton $G:\X\longrightarrow\X$ on a given $d$-dimensional
 group shift $\X$ whether $F$ is nilpotent, periodic or eventually periodic.
\end{theorem}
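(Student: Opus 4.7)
The plan is to handle nilpotency, eventual periodicity, and periodicity separately, reducing each to a decidable question about effectively computable group shifts.

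For nilpotency, the identity configuration $\id^{\Z^d}$ is always a fixed point of the group homomorphism $F$, hence $\id^{\Z^d}\in\limit{F}$. Therefore $F$ is nilpotent if and only if $\limit{F}=\{\id^{\Z^d}\}$. I would use Lemma~\ref{lem:limitset} to construct $\limit{F}$ and Corollary~\ref{cor:groupequivalnece}(b) to test the equality.

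The heart of the argument is eventual periodicity. I will prove that $F$ is eventually periodic if and only if the single-cell trace subshift $T=\trace{F}{\{\vec{0}\}}$ is finite. For the forward direction, assuming $F^{n+p}=F^n$, any $c\in\limit{F}$ sits in a bi-infinite orbit $(c^{(i)})_{i\in\Z}$ with $c^{(0)}=c$, so $c'=c^{(-n)}\in\X$ satisfies $F^n(c')=c$ and $F^p(c)=F^{n+p}(c')=F^n(c')=c$. Thus $F^p$ is the identity on $\limit{F}$, so every bi-infinite orbit is $p$-periodic in time, every single-cell trace is a shift-$p$-periodic element of $\G^{\Z}$, and $T$ is finite. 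Conversely, if $T$ is finite it splits into finitely many shift-orbits, giving a common shift-period $p$. For $c\in\limit{F}$ with bi-infinite orbit $(c^{(i)})$ and any $\vec{u}\in\Z^d$, the shifted orbit $(\tau^{\vec{u}}(c^{(i)}))_{i\in\Z}$ lies in $\limit{F}$ as well, and its trace at $\vec{0}$, namely $(c^{(i)}_{\vec{u}})_{i\in\Z}$, belongs to $T$ and is therefore shift-$p$-periodic. This forces $F^p(c)_{\vec{u}}=c_{\vec{u}}$ for every $\vec{u}$, hence $F^p|_{\limit{F}}=\id$, and combining with Lemma~\ref{lem:stable} to choose $n$ with $F^n(\X)=\limit{F}$ delivers $F^{n+p}=F^n$. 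The trace $T$ is effectively constructible by Lemma~\ref{lem:trace} and, being a one-dimensional group shift, is an SFT by Theorem~\ref{thm:groupSFT}; finiteness of a one-dimensional SFT is decidable by standard transition-graph analysis, checking that each strongly connected component is a single simple cycle.

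For periodicity I observe that $F$ is periodic if and only if it is both injective and eventually periodic: the forward direction is immediate, and for the converse Corollary~\ref{cor:injectiveimpliessurjective} promotes injectivity to surjectivity, giving $\limit{F}=\X$, so eventual periodicity collapses to $F^p=\id$ directly on $\X$. Combining Theorem~\ref{thm:injectivity} with the trace-based eventual periodicity test settles this case. The main obstacle will be the equivalence between eventual periodicity and finiteness of $\trace{F}{\{\vec{0}\}}$, particularly the reverse direction, which uses translation invariance to lift single-cell periodicity to global periodicity on $\limit{F}$, together with confirming that finiteness of a one-dimensional group SFT can indeed be decided algorithmically from the representation produced by Lemma~\ref{lem:trace}.
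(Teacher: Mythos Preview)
Your proposal is correct and follows essentially the same approach as the paper: the same three characterizations (nilpotent $\Leftrightarrow \limit{F}=\{\id_{\X}\}$; eventually periodic $\Leftrightarrow \trace{F}{\{\vec{0}\}}$ finite; periodic $\Leftrightarrow$ injective and eventually periodic), and the same effective tools (Lemmas~\ref{lem:trace} and~\ref{lem:limitset}, Corollary~\ref{cor:groupequivalnece}(b), Theorem~\ref{thm:injectivity}, decidability of finiteness for one-dimensional SFTs). You supply the justifications for the second and third equivalences that the paper leaves implicit; the only cosmetic slip is the phrase ``lies in $\limit{F}$ as well'' for the shifted orbit, where you mean that it is again a valid bi-infinite orbit (an element of $\SP{F}$), but the argument is sound.
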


\begin{proof}
We have that $F$ is
\begin{itemize}
\item nilpotent if and only if $\limit{F}=\{\id_{\X}\}$,
\item eventually periodic if and only if $\trace{F}{\{\vec{0}\}}$ is finite,
\item periodic if and only if it is injective and eventually periodic.
\end{itemize}
Group shifts $\limit{F}$ and
$\trace{F}{\{\vec{0}\}}$ can be effectively constructed (Lemma~\ref{lem:trace} and Lemma~\ref{lem:limitset}).
Equivalence of $\limit{F}$ and $\{\id_{\X}\}$ can be tested (Corollary~\ref{cor:groupequivalnece}(b))
and finiteness of a given one-dimensional subshift of finite type is easily checked, so nilpotency and eventual periodicity are decidable. By Theorem~\ref{thm:injectivity} injectivity of $F$ is decidable so also periodicity can be decided.
\end{proof}

A configuration $c\in\X$ is an \emph{equicontinuity point} of $F:\X\longrightarrow\X$ if for every finite $D\subseteq\Z^d$
there exists a finite $E\subseteq\Z^d$ such that $e|_{E}=c|_{E}$ implies $F^n(e)|_{D}=F^n(c)|_{D}$
for all $n\geq 0$. Orbits of equicontinuity points can hence
be reliably simulated even if the initial configuration is
not precisely known. Let $\eq{F}\subseteq\X$ be the set of equicontinuity points of $F$. We call $F$ \emph{equicontinuous} if
$\eq{F} = \X$.

Cellular automaton $F:\X\longrightarrow\X$ is \emph{sensitive to initial conditions}, or just \emph{sensitive}, if
there exists a finite observation window $D\subseteq\Z^d$ such that for every configuration $c\in\X$ and every
finite $E\subseteq\Z^d$ there is $e\in\X$ with $e|_{E}=c|_{E}$ but $F^n(e)|_{D}\neq F^n(c)|_{D}$ for some $n\geq 0$.
Clearly $c$ cannot be an equicontinuity point so for all sensitive $F$ we have $\eq{F} = \emptyset$. For group cellular automata
also the converse holds.

\begin{lemma}
\label{lem:dichotomy}
Let $F:\X\longrightarrow \X$ be a group cellular automaton over a $d$-dimensional group shift $\X$.
Then exactly one of the following two possibilities holds:
\begin{itemize}
\item $\eq{F} = \X$ and $F$ is equicontinuous, or
\item $\eq{F} = \emptyset$ and $F$ is sensitive.
\end{itemize}
\end{lemma}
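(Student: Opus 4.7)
The plan is to exploit the homogeneity of the group structure: because $F$ is a group homomorphism and $\X$ is a group, the question of whether a configuration $c$ is an equicontinuity point depends only on the behaviour of $F$ on configurations near the identity $\id_\X$, hence is independent of $c$. This yields the two cases $\eq{F}=\X$ and $\eq{F}=\emptyset$ immediately, and a short argument in the second case produces the observation window witnessing sensitivity.

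First I would translate the definition. Given $c,e\in\X$ and any finite domain $D$, I would use that $F^n$ is a group homomorphism to write
\[
F^n(e)F^n(c)^{-1}=F^n(ec^{-1}),
\]
so that $F^n(e)|_D=F^n(c)|_D$ is equivalent to $F^n(ec^{-1})|_D=\id_\X|_D$. Similarly, for any finite $E$, the condition $e|_E=c|_E$ is equivalent to $(ec^{-1})|_E=\id_\X|_E$. Since $\X$ is a group shift, as $e$ ranges over $\X$ so does $g:=ec^{-1}$. Therefore $c\in\eq{F}$ if and only if for every finite $D\subseteq\Z^d$ there exists a finite $E\subseteq\Z^d$ such that every $g\in\X$ with $g|_E=\id_\X|_E$ satisfies $F^n(g)|_D=\id_\X|_D$ for all $n\geq 0$. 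This characterization makes no reference to $c$.

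Consequently, either $\id_\X\in\eq{F}$, and then the characterization yields $\eq{F}=\X$, making $F$ equicontinuous; or $\id_\X\notin\eq{F}$, and then $\eq{F}=\emptyset$. To finish, I would show that the second alternative forces sensitivity. In that case there is a finite $D\subseteq\Z^d$ such that for every finite $E$ there exists $g_E\in\X$ with $g_E|_E=\id_\X|_E$ and some $n_E\geq 0$ for which $F^{n_E}(g_E)|_D\neq\id_\X|_D$. I claim $D$ is an observation window witnessing sensitivity of $F$: given any $c\in\X$ and any finite $E$, set $e:=g_E c\in\X$. Then $e|_E=c|_E$ while $F^{n_E}(e)F^{n_E}(c)^{-1}=F^{n_E}(g_E)$ differs from $\id_\X$ on $D$, so $F^{n_E}(e)|_D\neq F^{n_E}(c)|_D$.

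There is no serious obstacle; the only point needing care is to verify that the auxiliary configurations $g=ec^{-1}$ and $e=gc$ stay in $\X$, which is immediate from $\X$ being a subgroup of $\G^{\Z^d}$. The argument uses nothing beyond the group shift hypothesis and the fact that $F$ is a group homomorphism, so it works in any dimension and over any group shift.
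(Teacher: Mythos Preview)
Your proof is correct and follows essentially the same approach as the paper: both exploit the group homogeneity by translating via left multiplication (the paper uses $c''=c'ec^{-1}$, you use $g=ec^{-1}$ and $e=g_Ec$) to show that equicontinuity at one point is equivalent to equicontinuity everywhere, and then identify the sensitivity window. Your reduction to the identity configuration is a slightly cleaner packaging of the same idea, but the content is identical.
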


\begin{proof}
Assume that some $c\not\in\eq{F}$ exists, which means that
there exists a finite $D\subseteq\Z^d$ such that for all finite
$E\subseteq\Z^d$ there is $e\in\X$ and $n\geq 1$ with $e|_{E}=c|_{E}$ but $F^n(e)|_{D}\neq F^n(c)|_{D}$.
Consider an arbitrary $c'\in\X$.
For $c''=c'ec^{-1}\in \X$ we then have that
$c''|_{E}=c'|_{E}$ but $F^n(c'')|_{D}\neq F^n(c')|_{D}$.
This proves that $c'\not\in\eq{F}$.

We can conclude that for group cellular automata either $\eq{F} = \X$ or $\eq{F} = \emptyset$.
By definition, $\eq{F} = \X$ is equivalent to equicontinuity of $F$.

If $F$ is sensitive then $\eq{F} = \emptyset$ holds. Conversely, if $F$ is not sensitive then, by definition,
for all finite $D\subseteq\Z^d$ there exists $c\in\X$ and a finite $E\subseteq\Z^d$ such that $e|_{E}=c|_{E}$
implies that $F^n(e)|_{D}= F^n(c)|_{D}$ for all $n\geq 0$.
As above, we can replace $c$ by any other configuration $c'$, which
implies that all configurations are equicontinuity points, i.e., $\eq{F} \neq \emptyset$.
\end{proof}

We can decide equicontinuity and sensitivity.

\begin{theorem}
\label{thm:sensitive}
 It is decidable for a given group cellular automaton $G:\X\longrightarrow\X$ on a given $d$-dimensional
 group shift $\X$ whether $F$ is equicontinuous or sensitive to initial conditions.
\end{theorem}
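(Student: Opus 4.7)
The plan is to reduce equicontinuity to a property already shown to be decidable. By Lemma~\ref{lem:dichotomy} the cases of equicontinuity and sensitivity are mutually exclusive and exhaustive, so it suffices to decide equicontinuity. I claim that for a group cellular automaton $F:\X\longrightarrow\X$, equicontinuity is equivalent to eventual periodicity. Once this equivalence is established, Theorem~\ref{thm:decideperiodicity} immediately supplies the decision algorithm.

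The easy direction, valid for arbitrary cellular automata, is that eventual periodicity implies equicontinuity. If $F^{N+p}=F^{N}$ and the local rule of $F$ has neighborhood radius $r$, then for any finite observation window $D$ the block $F^{n}(c)|_{D}$ is a function of $c$ restricted to the $Nr$-enlargement $E$ of $D$, for every $n\geq 0$: for $n\leq N$ by the usual locality argument, and for $n>N$ via the identity $F^{n}=F^{N+((n-N)\bmod p)}$. So the single finite window $E$ witnesses equicontinuity at every configuration.

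The harder direction uses the group structure. Assume $F$ is equicontinuous. Applying the definition at $c=\id_{\X}$ with $D=\{\vec{0}\}$ yields a finite $E\subseteq\Z^{d}$ such that $c|_{E}=\id$ implies $F^{n}(c)_{\vec{0}}=\id_{\G}$ for all $n\geq 0$. Consider the forward-trace map $\phi:\X\longrightarrow\G^{\N}$ defined by $\phi(c)_{n}=F^{n}(c)_{\vec{0}}$. Since each $F^{n}$ is a group homomorphism, so is $\phi$. Its kernel contains the subgroup $K_{E}=\{c\in\X\mid c|_{E}=\id\}$, and since $\X/K_{E}$ embeds into $\G^{E}$ via $cK_{E}\mapsto c|_{E}$, the subgroup $K_{E}$ has finite index; hence $\phi(\X)$ is a finite group. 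The relation $\phi\circ F=\sigma\circ\phi$, where $\sigma$ is the shift on $\G^{\N}$, shows that $\sigma$ acts on the finite set $\phi(\X)$, so there exist $N\geq 0$ and $p\geq 1$ with $\sigma^{N+p}=\sigma^{N}$ on $\phi(\X)$. Unpacking this gives $F^{N+p}(c)_{\vec{0}}=F^{N}(c)_{\vec{0}}$ for every $c\in\X$, and translation invariance of $F$ (applied to arbitrary translates of $c$) then yields the pointwise identity $F^{N+p}=F^{N}$ on all of $\X$, so $F$ is eventually periodic.

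The main obstacle is this second direction: extracting a uniform eventual period from equicontinuity, which a priori only carries pointwise information. The argument relies on two features special to the group setting, namely that cylinder-type subgroups $K_{E}$ have finite index (making the image $\phi(\X)$ finite), and that translation invariance lets one lift the identity at the single cell $\vec{0}$ to an identity on the whole configuration. With the equivalence in hand, the theorem follows from Theorem~\ref{thm:decideperiodicity} and Lemma~\ref{lem:dichotomy}.
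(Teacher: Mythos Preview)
Your proof is correct and follows essentially the same approach as the paper: reduce via Lemma~\ref{lem:dichotomy} to deciding equicontinuity, prove that equicontinuity is equivalent to eventual periodicity, and invoke Theorem~\ref{thm:decideperiodicity}. The only difference is presentational: in the hard direction the paper argues that $c|_{E}$ determines the forward trace and hence the (bi-infinite) trace subshift $\trace{F}{\{\vec{0}\}}$ is finite, while you phrase the same observation as the forward-trace homomorphism $\phi$ factoring through the finite-index subgroup $K_{E}$; your formulation is slightly cleaner in that it uses equicontinuity only at $\id_{\X}$ and avoids the detour through bi-infinite traces, but the underlying idea is identical.
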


\begin{proof}
By the dichotomy in Lemma~\ref{lem:dichotomy} it is enough to decide equicontinuity. Let us show that $F$ is equicontinuous if
and only if it is eventually periodic, after which the decidability follows from Theorem~\ref{thm:decideperiodicity}.

If $F$ is eventually periodic then it is trivially equicontinuous since there are only finitely many different
functions $F^k$, $k\geq 0$, and all these functions are continuous.  Conversely, if $F$ is equicontinuous
then one easily sees that there are only finitely many different traces in $\trace{F}{\{\vec{0}\}}$.
Indeed, equicontinuity at configuration $c$ implies that there is a finite set
$E\subseteq\Z^d$ such that $e|_{E}=c|_{E}$ implies that $F^n(e)_{\vec{0}}=F^n(c)_{\vec{0}}$
for all $n\geq 0$. As in the proof of Lemma~\ref{lem:dichotomy} we see that the same set $E$ works for all
configurations $c$. But then $|\Patt{\X}{E}|$ is an upper bound on the number of
different traces in $\trace{F}{\{\vec{0}\}}$ because
$c|_{E}$ uniquely identifies the
positive trace of $c$ (and by the translation invariance of the trace subshift any $k$ different traces
can be shifted to provide $k$ different positive traces.)

Finiteness of $\trace{F}{\{\vec{0}\}}$ implies that all traces are periodic with a common period, so that
cellular automaton $F$ is periodic on its limit set. Hence $F$ is eventually periodic.
\end{proof}

\subsection*{Mixing properties}

A cellular automaton $F:\X\longrightarrow \X$ is \emph{transitive} if there is an orbit from every non-empty open set to
every non-empty open set, that is, if for any finite $D\subseteq\Z^d$ and all $p,q\in \Patt{\X}{D}$
there exists $c\in\X$ and $n\geq 0$ such that $c|_{D}=p$ and $G^n(c)|_{D}=q$. It is \emph{mixing} if there exists such $c$ for every sufficiently large $n$, that is, if for all $D,p$ and $q$ as above there is $m$ such that for every $n\geq m$
there exists $c\in\X$  such that $c|_{D}=p$ and $G^n(c)|_{D}=q$.

For these properties we obtain only semi-algorithms for the negative instances. Decidability remains open.

\begin{theorem}
\label{thm:mixing}
 It is semi-decidable for a given group cellular automaton $G:\X\longrightarrow\X$ on a given $d$-dimensional
 group shift $\X$ whether $F$ is non-transitive or non-mixing.
\end{theorem}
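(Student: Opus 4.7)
The plan is to reduce both questions to reachability analyses in one-dimensional trace subshifts, after a preliminary surjectivity check.

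A transitive cellular automaton $F : \X \longrightarrow \X$ has image $F(\X)$ that is both dense (by transitivity) and closed (continuous image of a compact set), hence equal to $\X$. For group cellular automata, surjectivity combined with Lemma~\ref{lem:stable} gives $\X = \limit{F}$; and since mixing implies transitivity, the same holds in the mixing case. The semi-algorithm therefore first tests surjectivity using Theorem~\ref{thm:injectivity}: if $F$ is not surjective it halts with the conclusion that $F$ is both non-transitive and non-mixing. We may now assume $\X = \limit{F}$, so that every $c \in \X$ lies in a bi-infinite orbit and, for every finite $D \subseteq \Z^d$, the forward sequence $c|_D, F(c)|_D, F^2(c)|_D, \ldots$ is the positive half of a configuration of $\trace{F}{D}$. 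Under this assumption, $F$ is transitive if and only if for every finite $D$ and all $p, q \in \Patt{\X}{D}$ the set
\[
M(D,p,q) \;=\; \{\, n \geq 0 : \exists t \in \trace{F}{D},\ t_0 = p,\ t_n = q \,\}
\]
is non-empty, and $F$ is mixing if and only if, moreover, $M(D,p,q)$ is cofinite in $\N$ for every such triple.

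The semi-algorithm enumerates triples $(D, p, q)$ with $p, q \in \Patt{\X}{D}$, effectively constructs $\trace{F}{D}$ via Lemma~\ref{lem:trace}, and tests the relevant condition. Since $\trace{F}{D}$ is a one-dimensional subshift of finite type over the finite alphabet $\G^D$, it admits a finite directed labelled graph presentation computable from its forbidden patterns, in which $M(D, p, q)$ is the set of path lengths from a computable set of source vertices labelled by $p$ to a computable set of target vertices labelled by $q$. Standard strongly-connected-component analysis, together with the gcds of cycle lengths within each component, yields an effective description of $M(D, p, q)$ as a finite union of arithmetic progressions plus a finite exceptional set; from such a description, emptiness of $M(D, p, q)$ (witnessing non-transitivity) and non-cofiniteness of $M(D, p, q)$ (witnessing non-mixing) are immediate to test. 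The algorithm outputs the corresponding negative conclusion as soon as a witness is produced.

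The main obstacle, and the reason only semi-decidability is obtained, is that no a priori bound on the size of $D$ required to exhibit a failure is available: if $F$ is in fact transitive (respectively mixing), the enumeration over triples runs forever. Full decidability of transitivity and mixing for group cellular automata remains open.
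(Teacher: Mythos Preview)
Your argument is correct and follows the same overall route as the paper: reduce the question to effectively checkable properties of the one-dimensional trace subshifts $\trace{F}{D}$, enumerating over finite $D$. The paper phrases the per-$D$ test simply as ``verify that $\trace{F}{D}$ is not transitive (not mixing)'' and cites standard SFT algorithms; you spell the same test out concretely as emptiness or non-cofiniteness of the path-length set $M(D,p,q)$ in a graph presentation. Under surjectivity these two formulations are equivalent, since then the letters occurring in $\trace{F}{D}$ are exactly $\Patt{\X}{D}$.

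Your preliminary surjectivity check is not a cosmetic addition but a genuine tightening. The paper's one-line justification ``Clearly $F$ is not transitive if and only if such a choice of $D$ exists'' is not quite right in the non-surjective case: for the trivial endomorphism $c\mapsto \id_{\X}$ on any nontrivial group shift $\X$, the only bi-infinite orbit is the constant one, so every $\trace{F}{D}$ is a single point and hence transitive and mixing, yet $F$ is plainly neither. Your surjectivity test via Theorem~\ref{thm:injectivity} disposes of exactly these cases; after that, $\X=\limit{F}$ and your characterisation in terms of $M(D,p,q)$ goes through cleanly. So your version is the paper's argument made rigorous at the one point where the paper is loose.
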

\begin{proof}
A non-deterministic semi-algorithm guesses a finite $D\subseteq \Z^d$, forms the trace subshift $\trace{F}{D}$,
and verifies that the trace subshift is not transitive (not mixing, respectively). Clearly $F$ is not transitive (not mixing, respectively) if and only if such a choice of $D$ exists. For one-dimensional subshifts of finite type,
such as $\trace{F}{D}$, it is easy to decide transitivity and the mixing property~\cite{LindMarcus95}.
\end{proof}

\begin{question}
Is it decidable if a given group cellular automaton is transitive (or mixing) ?
\end{question}

\section{Conclusions}
\label{sec:conclusions}

We have demonstrated how the ``swamp of undecidability''~\cite{lindswamp} of multidimensional SFTs and cellular automata
is mostly absent in the group setting. For general cellular automata nilpotency~\cite{kari1992nilpotency,Lewis79}, as well as eventual periodicity, equicontinuity and sensitivity~\cite{equicontinuity} are undecidable on one-dimensional full shifts, and
periodicity~\cite{periodicity}, as well as sensitivity, mixingness and transitivity~\cite{lukkarila} are undecidable
even among reversible one-dimensional cellular automata on the full shift;
injectivity and surjectivity are undecidable for two-dimensional cellular automata on the full shift~\cite{karireversible}.
Algorithms and characterizations
have been known for linear and additive cellular automata (on full shifts, sometimes depending on the dimension~\cite{DennunzioTCS2020,DennunzioPaper2019}). Our results improve these  to the
greater generality of non-commutative groups and cellular automata on higher dimensional subshifts.
However, it should be noted that the
existing characterizations in the literature typically provide easy to check conditions on the
local rule of the cellular automaton for the considered properties, while algorithms extracted from our proofs are
impractical and only serve the purpose of proving decidability.

It remains open whether it is decidable if a given group cellular automaton is pre-injective, transitive or mixing.

\bibliographystyle{unsrt}
\bibliography{additive_SI}

\end{document}